\newcommand{\bbC}    {\mathbb C}
\newcommand{\bbR}    {\mathbb R}
\newcommand{\bbF}     {\mathbb F}
\newcommand{\bbZ}    {\mathbb Z}
\newcommand{\bbN}    {\mathbb N}
\newcommand{\bbI}      {\mathbb I}
\newcommand{\cC}    {{\cal C}}
\newcommand{\cE}    {{\cal E}}
\newcommand{\cF}    {{\cal F}}
\newcommand{\cS}    {{\cal S}}
\newcommand{\al}    {\alpha}
\newcommand{\be}    {\beta}
\newcommand{\de}    {\delta}
\newcommand{\om}    {\omega}
\newcommand{\Ga}    {\Gamma}
\newcommand{\La}    {\Lambda}
\newcommand{\Om}    {\Omega}
\newcommand{\vte}    {\vartheta}
\newcommand{\A}      {\mathrm A}
\newcommand{\B}       {\mathrm B}
\newcommand{\D}       {\mathrm D}
\newcommand{\M}         {\mathrm M}
\newcommand{\X}       {\mathrm X}
\newcommand{\U}        {\mathrm U}
\newcommand{\bA}     {\mathbf A}
\newcommand{\bB}      {\mathbf B}
\newcommand{\mbf}       {\mathbf f}
\newcommand{\bg}       {\mathbf g}
\newcommand{\bU}     {\mathbf U}
\newcommand{\bu}    {\mathbf u}
\newcommand{\bfI}     {\mathbf I}
\newcommand{\bfJ}     {\mathbf J}
\newcommand{\bj}     {\mathbf j}
\newcommand{\fD}    {\mathfrak D}
\newcommand{\fF}      {\mathfrak F}
\newcommand{\w}       {\wedge}
\newcommand{\ck}      {\check} 
\newcommand{\na}      {\nabla} 
\newcommand{\bLa}    {\pmb{\Lambda}}
\newcommand{\bphi}   {\pmb{\phi}}
\newcommand{\bchi}   {\pmb{\chi}}
\newcommand{\bpsi}   {\pmb{\psi}}
\newcommand{\bxi}     {\pmb{\xi}}
\newcommand{\bet}     {\pmb{\eta}} 
\newcommand{\bvte}   {\pmb{\vartheta}}
\newcommand{\bde}     {\pmb{\delta}}
\newcommand{\p}        {\partial}
\newcommand{\lan}      {\langle}
\newcommand{\ran}      {\rangle}
\newcommand{\Int}       {\textstyle{\int\!}}
\newcommand{\bcE}     {\pmb{\cal E}} 
\newcommand{\oF}       {\overset{\,\circ}\cF{}}
\DeclareMathOperator{\Hom}   {Hom}
\DeclareMathOperator{\End}   {End}
\DeclareMathOperator{\ord}   {ord}
\DeclareMathOperator{\im}    {Im}
\DeclareMathOperator{\ke}    {Ker}
\DeclareMathOperator{\com}   {{\scriptstyle\circ}\,}
\DeclareMathOperator{\ev}    {ev}
\DeclareMathOperator{\Div}     {Div}
\newtheorem{theorem}{Theorem}
\newtheorem{lemma}{Lemma}
\newtheorem{prop}{Proposition}
\newtheorem{cor}{Corollary}
\theoremstyle{definition} 
\newtheorem{rem}{Remark} 
\newtheorem{defi}{Definition}
\newtheorem{ass}{Assumption}
\begin{document}

\title{Lie\,-Poisson structures over differential algebras}
\author
          {
             Zharinov V.V.
             \thanks{Steklov Mathematical Institute}
             \thanks{E-mail: zharinov@mi.ras.ru}
            }
\date{}
\maketitle

\begin{abstract}
In this paper we use key elements of the Olver's approach to Hamiltonian evolution 
equations in partial derivatives and propose an algebraic construction appropriate 
for Hamiltonian evolution systems with constraints.
\end{abstract} 

{\bf Keywords:} differential algebra, differential bicomplex, Lie-Poisson structure, 
Hamiltonian mapping, Hamiltonian evolution system of PDEs, constraint. 

\section{Introduction.} 
In his book \cite{PJO}, Chapter VII,  P.J. Olver has developed a Hamiltonian approach 
to evolution systems of differential equations in partial derivatives. 
In essence, the main component of his technics is the construction of a Lie-Poisson structure 
over the differential algebra of the differential functions (see Subsection \ref{ODA}, below), 
associated with evolution systems without constraints. 

In this paper we use key elements of the Olver's scheme and present a general algebraic 
construction appropriate, in particular, for Hamiltonian evolution systems of partial differential equations with constraints. 
Namely, in a pure algebraic manner we describe Lie-Poisson structures over 
arbitrary differential algebras. The main result is Theorem \ref{MT}, 
the algebraic analog of the Theorem 7.8 from the book \cite{PJO}, 
characterizing Hamiltonian operators over differential algebra under study. 
Our abstract results can be used as follows. 
For a given system of constraints (a system of partial differential equations in the space 
variables) one should  define an appropriate differential algebra and then search for 
Hamiltonian operators compatible with the given evolution system. 
Note, both these problems are rather difficult. 
Moreover, the second one is solvable extremely rare, 
each solvable case is a  great success and an important step in the study 
of the given evolution system with constraints.  

We use the following general notations: 
\begin{itemize} 
	\item 
		$\bbF=\bbR,\bbC$, \quad $\bbN=\{1,2,3,\dots\}\subset\bbZ_+=\{0,1,2,\dots\}$; 
	\item 
		$\M=\{1,\dots,m\}$,  \quad $m\in\bbN$;
	\item 
		$\bbI=\bbZ^\M_+=\{i=(i^1,\dots,i^m)\mid i^\mu\in\bbZ_+, \ \mu\in\M\}$.
\end{itemize} 

\section{Algebraic essentials.} 

\subsection{The basic algebra.} 
Let $\cF$ be an associative commutative unital algebra over the number field  $\bbF$.
We denote by $\fD=\fD(\cF)$ the set of all {\it differentiations} of the algebra $\cF$, 
i.e., all $\bbF$-linear mappings $X :\cF\to\cF$ (i.e., $X\in\End_\bbF(\cF)$), 
satisfying the {\it Leibniz rule} 
\begin{equation*}
	X(f\cdot g)=(Xf)\cdot g+f\cdot(Xg) \quad \text{for all}\quad f,g\in\cF. 
\end{equation*}
The set $\fD$ has two matching structures, namely, 
the structure of a Lie algebra with the commutator $X,Y\mapsto[X,Y]=X\com Y-Y\com X$
as the Lie bracket, 
the structure of a $\cF$-module with $(f\cdot X)g=f\cdot(Xg)$, 
and the matching condition $[X,g\cdot Y]=(Xg)\cdot Y+g\cdot[X,Y]$ 
for all $f,g\in\cF$, $X,Y\in\fD$. 

For any index sets $\A,\B$ the $\cF$-module 
\begin{equation*}
	\cF^\A_\B=\big\{\eta=(\eta^\al_\be) \ \big| \ \eta^\al_\be\in\cF, \ \al\in\A, \ \be\in\B\big\} 
\end{equation*}
is defined, where the multiplication $f\cdot\eta$, $f\in\cF$, $\eta\in\cF^\A_\B$,  
is defined component-wise, 
i.e., $(f\cdot\eta)^\al_\be=f\cdot\eta^\al_\be$, $\al\in\A$, $\be\in\B$. 

For any $\cF$-module $\cF^\A_\B$ and any differentiation  $X\in\fD$ the linear mapping
$X : \cF^\A_\B\to\cF^\A_\B$ is defined component-wise, 
$\eta=(\eta^\al_\be)\mapsto X\eta=(X\eta^\al_\be)$. 
In particular, the Leibniz rule takes the form: 
$X(f\cdot\eta)=(Xf)\cdot\eta+f\cdot(X\eta)$, for all $X\in\fD$, $f\in\cF$, $\eta\in\cF^\A_\B$. 

We denote by $\oF^\A_\B$ the $\cF$-module of all elements 
$\eta=(\eta^\al_\be)\in\cF^\A_\B$, s.t. for any upper index $\al\in\A$ 
only a finite number of components $\eta^\al_\be\ne0$. 
In particular, $\oF^\A=\cF^\A$ for any index set $\A$, 
while $\oF_\B$ consists of all finite elements $\eta=(\eta_\be)\in\cF_\B$, 
and $\oF_\B=\cF_\B$ iff (i.e., if and only if) the index set $\B$ is finite.  

Elements $\phi=(\phi^\al)\in \cF^A$ with the upper indices (superscripts) 
may be considered as ``vectors'', elements $\xi=(\xi_\al)\in\oF_A$ with the lower indices 
(subscripts) may be considered as ``covectors'', and elements with multiple indices 
may be considered as ``tensors''. 
The summation over repeated upper and lower indices is assumed 
and it may be considered as ``pairing'' (contraction). 
The index sets may be infinite, 
but we shall ensure that all formally infinite sums in fact will be finite. 

\subsection{The de Rham complex.}
The $\cF$-modules $\Om^q=\Om^q(\cF)$, $q\in\bbZ_+$, of {\it $q$-forms over 
the algebra $\cF$ with values in $\cF$} are defined as follows: 
\begin{equation*} 
	\Om^0=\cF, \quad \Om^q=\Hom_\cF(\w^q_\cF\fD;\cF), \quad q\in\bbN.
\end{equation*}
The direct sum $\Om=\oplus_{q\in\bbZ_+}\Om^q$ (here $\oplus=\oplus_\cF$)
has the natural structure of an exterior algebra. 
The {\it exterior differential} $d\in\End_\bbF(\Om)$ is defined by the {\it Cartan formula}: 
\begin{align*} 
	d\om(X_0,\dots,X_q)
	=&\frac1{q+1}\bigg\{\sum_{0\le r\le q}(-1)^r X_r\big(\om(X_0,\dots\ck{X_r}\dots,X_q)\big) \\
	+&\sum_{0\le r<s\le q}(-1)^{r+s}\om([X_r,X_s],X_0,\dots\ck{X_r}\dots\ck{X_s}\dots,X_q)\bigg\}
\end{align*}
for all $q\in\bbZ_+$, $\om\in\Om^q$, $X_0,\dots,X_q\in\fD$ 
(here and hereafter, the ``checked'' arguments are understood to be omitted). 

Note, $d^q=d|_{\Om^q} : \Om^q\to\Om^{q+1}$, $q\in\bbZ_+$, and $d\com d=0$. 
Thus, the {\it de Rham complex $\{\Om^q;d^q\}$ of the algebra $\cF$} is defined 
(see, e.g., \cite{Z11}). 
The {\it cohomology spaces} of this complex are $H^q=H^q(\cF)=\ke d^q\big/\im d^{q-1}$, 
where $q\in\bbZ_+$, $\ke d^q=\{\om\in\Om^q\mid d\om=0\}$, 
$\im d^{q-1}=\{\om=d\chi\mid \chi\in\Om^{q-1}\}=d\Om^{q-1}$ ($\Om^{-1}=0$). 

\subsection{The differential algebra. The basic assumptions.}
The algebra $\cF$ is called {\it differential} if a subalgebra $\fD_H=\fD_H(\cF)$ 
of the Lie algebra $\fD$ is specified with a $\cF$-basis $D=\{D_\mu\mid \mu\in\M\}$, 
where commutators $[D_\mu,D_\nu]=0$, $\mu,\nu\in\M$.
Note, a differential algebra $\cF$ is in fact the pair $(\cF,D)$ 
(see, e.g., \cite{Z12},\cite{Z2}). 

Let $(\cF,D)$ be a differential algebra under study. 
\begin{ass} 
We assume that the $\cF$-module $\fD$ is splitted into a direct sum 
$\fD=\fD_V\oplus_\cF\fD_H$ 
(here and hereafter, $V$ stands for the {\it vertical} component 
while $H$ stands for the {\it horizontal} component), 
where $\fD_V=\fD_V(\cF)$ is a subalgebra of $\fD$, 
such that $[D,\fD_V]\subset\fD_V$ 
(i.e. $[D_\mu,V]\in\fD_V$ for all $V\in\fD_V$ and $\mu\in\M$). 
Moreover, we assume that the Lie algebra $\fD_V$  has a formal $\cF$-basis 
$\{\p_a\mid a\in\bA\}$, where $[\p_a,\p_b]=0$, $a,b\in\bA$, 
$\bA$ is an index set.  
In this case, $[D_\mu,\p_a]=\Ga^b_{\mu a}\cdot\p_b$, 
where the symbol $\Ga=(\Ga^b_{\mu a})\in\oF^\bA_{\M\bA}$, $a,b\in\bA$, $\mu\in\M$. 
Note, that here the set $\M$ is finite, while the set $\bA$ is as a rule infinite, 
and according to the above definition of the $\cF$-module $\oF^\bA_{\M\bA}$, 
for any index $b\in\bA$ only a finite number of coefficients $\Ga^b_{\mu a}\ne0$. 
\end{ass}

\begin{ass}
We assume that for any $f\in\cF$ the action $\p_a f\ne0$ 
only for a finite number of indices $a\in\bA$ (i.e. $\p f=(\p_a f)\in\oF_\bA$). 
In particular, $\fD_V=\{V=\phi^a\cdot\p_a\mid \phi^a\in\cF\}$, and the action 
of the formally infinite sum $\phi^a\cdot\p_a$ on $\cF$ is well defined. 
\end{ass} 

\begin{ass} 
We assume that the differential algebra $(\cF,D)$ is of the {\it du Bois-Reymond type}, 
i.e., it has the property: 
the equality $\phi\cdot\psi=D_\mu\chi^\mu$ is valid for a fixed $\phi\in\cF$ and all 
$\psi\in\cF$ with some $\chi^\mu=\chi^\mu(\phi,\psi)\in\cF$, $\mu\in\M$, iff  $\phi=0$. 
\end{ass}

Let us denote by 
$\fD_E=\fD_E(\cF)=\{V\in\fD_V\mid [D_\mu,V]=0, \ \mu\in\M\}$ the set of all {\it evolutionary}
differentiations of the algebra $\cF$. 
The set $\fD_E$ is a subalgebra of the Lie algebra $\fD_V$, because 
$[\fD_E,\fD_E]\subset\fD_E$ due to the {\it Jacobi identity} for commutators, 
but it is not a submodule of the $\cF$-module $\fD_V$ (see, e.g., \cite{Z1}). 

Let us consider the  set $\bcE=\ke\na=\{\bphi=(\phi^a)\in\cF^\bA\mid \na\bphi=0\}$,
where 
\begin{equation*}
	\na : \cF^\bA\to\cF^\bA_\M, \quad \bphi=(\phi^a)\mapsto\bet=(\eta^a_\mu), 
		\quad \eta^a_\mu=D_\mu\phi^a+\Ga^a_{\mu b}\cdot\phi^b.
\end{equation*}
Clear, the set $\bcE$ has the structure of a linear subspace 
of the linear space $\cF^\bA$. 

\begin{prop} 
The mapping $\ev : \bcE\to\fD_E$, $\bphi=(\phi^a)\mapsto U=\ev_{\bphi}=\phi^a\cdot\p_a$, 
is an isomorphism of linear spaces. 
Moreover, the structure of the Lie algebra on $\fD_E$ 
defines the isomorphic structure on $\bcE$ by the rule: 
\begin{equation*}
	W=[U,V]\mapsto\bxi=[\bphi,\bpsi],   
	\quad U=\phi^a\cdot\p_a, \quad V=\psi^a\cdot\p_a, \quad W=\xi^a\cdot\p_a, 
\end{equation*}
where $\xi^a=U\psi^a-V\phi^a$, $a\in\bA$. 
\end{prop}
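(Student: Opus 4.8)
The plan is to check, in order, that $\ev$ maps $\bcE$ into $\fD_E$ (not merely into $\fD_V$), that it is an $\bbF$-linear bijection, and finally that it carries the Lie bracket to the stated bracket on $\bcE$. Everything rests on one computation: the expansion of $[D_\mu,\ev_\bphi]$ in the basis $\{\p_a\}$. Writing $U=\ev_\bphi=\phi^a\cdot\p_a$ and applying the matching identity $[X,g\cdot Y]=(Xg)\cdot Y+g\cdot[X,Y]$ together with $[D_\mu,\p_a]=\Ga^b_{\mu a}\cdot\p_b$ from the first Assumption, I would obtain
\begin{equation*}
	[D_\mu,U]=(D_\mu\phi^a)\cdot\p_a+\phi^a\cdot\Ga^b_{\mu a}\cdot\p_b
		=(D_\mu\phi^a+\Ga^a_{\mu b}\cdot\phi^b)\cdot\p_a=\eta^a_\mu\cdot\p_a,
\end{equation*}
after relabelling the summation indices in the second term. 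Since $\bphi\in\bcE$ says exactly that $\bet=\na\bphi=0$, i.e.\ $\eta^a_\mu=0$ for all $a,\mu$, this gives $[D_\mu,U]=0$ and hence $U\in\fD_E$. Read in reverse, the same identity yields surjectivity: any $U\in\fD_E$ can be written $U=\phi^a\cdot\p_a$ by the second Assumption, and $[D_\mu,U]=0$ forces $\eta^a_\mu\cdot\p_a=0$, so $\na\bphi=0$ and $\bphi\in\bcE$.

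The $\bbF$-linearity of $\ev$ is immediate from the component-wise definitions, while injectivity is just the uniqueness of the expansion in the formal $\cF$-basis $\{\p_a\}$: $\phi^a\cdot\p_a=0$ implies every $\phi^a=0$. Thus $\ev$ is a linear isomorphism of $\bcE$ onto $\fD_E$. For the bracket I would expand $[U,V]$ with $U=\phi^a\cdot\p_a$, $V=\psi^b\cdot\p_b$ using the matching identity and bilinearity; the terms carrying $[\p_a,\p_b]$ drop out by the commutativity assumed in the first Assumption, and one is left with $[U,V]=(U\psi^a-V\phi^a)\cdot\p_a=\xi^a\cdot\p_a=\ev_\bxi$. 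Because $\fD_E$ is already known to be a Lie subalgebra, $[U,V]\in\fD_E$, so $\bxi=\ev^{-1}([U,V])$ automatically belongs to $\bcE$; this is what lets the bracket be transported and makes $\ev$ a Lie-algebra isomorphism.

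The point demanding the most care is not any single identity but the handling of the formally infinite sums over the index set $\bA$ and their rearrangement. Here the two Assumptions do the real work: the second guarantees that $\phi^a\cdot\p_a$ acts on each $f\in\cF$ through only finitely many nonzero terms, so that $U$ and the quantities $U\psi^a$, $V\phi^a$ are well defined, and the finiteness encoded in $\Ga\in\oF^\bA_{\M\bA}$ ensures that for each fixed $a$ and $\mu$ only finitely many $\Ga^a_{\mu b}\cdot\phi^b$ contribute, so that $\na\bphi$ is well defined and the index relabelling in the displayed computation is legitimate. I would record these finiteness observations explicitly before committing to the symbolic cancellations.
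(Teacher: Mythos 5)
Your proof is correct, and since the paper states this proposition without any proof, there is nothing to diverge from: the computation $[D_\mu,\ev_\bphi]=(D_\mu\phi^a+\Ga^a_{\mu b}\cdot\phi^b)\cdot\p_a=\eta^a_\mu\cdot\p_a$ is exactly the identification of $\ke\na$ with $\fD_E$ that the author's definitions are set up to produce, and your bracket computation and finiteness remarks (the relevant one being that for fixed $a,\mu$ only finitely many $\Ga^a_{\mu b}\ne0$) are the right points to make explicit. This is evidently the intended argument.
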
 

\subsection{The differential bicomplex.}
The splitting $\fD=\fD_V\oplus_\cF\fD_H$ of differentiations 
leads to the splitting of the differential forms, 
$\Om=\oplus_{p,q\in\bbZ_+}\Om^{pq}$,  where 
\begin{equation*} 
	\Om^{pq}=\Om^p_V\w_\cF\Om^q_H, \quad 
	\Om^p_V=\Hom_\cF(\w^p_\cF\fD_V;\cF), \quad 
	\Om^q_H=\Hom_\cF(\w^q_\cF\fD_H;\cF), 
\end{equation*} 
in particular, $\Om^{00}=\Om^0_V=\Om^0_H=\cF $ (here, $\oplus=\oplus_\cF$). 

In the same way, the exterior differential $d : \Om\to\Om$ 
splits in the two components: $d=d_V+d_H$, where 
\begin{align*} 
	d^{pq}_V
		&=d_V\big|_{\Om^{pq}} : \Om^{pq}=\Om^p_V\w_\cF\Om^q_H
			\to\Om^{p+1}_V\w_\cF\Om^q_H=\Om^{p+1,q}, \\  
	d^{pq}_H
		&=d_H\big|_{\Om^{pq}} : \Om^{pq}=\Om^p_V\w_\cF\Om^q_H
			\to\Om^p_V\w_\cF\Om^{q+1}=\Om^{p,q+1}. 		
\end{align*}
The identity $d\com d=0$ implies the identities: 
\begin{equation*} 
	d_V\com d_V=d_V\com d_H+d_H\com d_V=d_H\com d_H=0.
\end{equation*}

Further, the $\cF$-module $\Om^1_V=\Hom_\cF(\fD_V;\cF)=\fD_V^*$ 
has the dual $\cF$-basis $\{\rho^a\mid a\in\bA\}$, $\rho^a(\p_b)=\de^a_b$, $a,b\in\bA$, 
so $\Om^1_V=\{\om_a\cdot\rho^a\mid \om=(\om_a)\in\oF_\bA\}$. 
In general, for any $p\in\bbN$ the $\cF$-module 
\begin{equation*}
	\Om^p_V=\w^p_\cF\Om^1_V
	=\big\{\om=\om_{a_1\dots a_p}\cdot\rho^{a_1}\w\dots\w\rho^{a_p} 
	\ \big| \ \om_{a_1\dots a_p}\in\cF\big\}, 
\end{equation*}
where only a finite number of coefficients $\om_{a_1\dots a_p}\ne0$. 

In the same way, the $\cF$-module $\Om^1_H=\Hom_\cF(\fD_H;\cF)=\fD_H^*$ 
has the dual $\cF$-basis $\{\vte^\mu\mid \mu\in\M\}$, $\vte^\mu(D_\nu)=\de^\mu_\nu$, 
so $\Om^1_H=\{\om_\mu\cdot\vte^\mu\mid \om_\mu\in\cF\}$.
For any $q\in\bbN$ we have $\Om^q_H=\w^q_\cF\Om^1_H$ 
(in fact, $\Om^q_H=0$ for $q>m$). 

The equality $d(\om\w\chi)=(d\om)\w\chi+(-1)^{\ord\om}\om\w(d\chi)$, 
$\om,\chi\in\Om$, implies the analogous equalities for $d_V$ and $d_H$. 

It is easy to verify that 
\begin{itemize} 
	\item 
		$d_V L=\p_a L\cdot\rho^a$, \quad $d_H L=D_\mu L\cdot \vte^\mu$, 
		\quad $ L\in\Om^0=\cF$; 
	\item 
		$d_V\rho^a=0$, \quad 
		$d_H\rho^a=\Ga^a_{\mu b}\cdot \rho^b\w\vte^\mu$, 
		\quad $a\in\bA$, \ $\mu\in\M$; 
	\item 
		$d_V\vte^\mu=0$, \quad $d_H\vte^\mu=0$, \quad $\mu\in\M$.
\end{itemize}
Hence, for example, let 
$\om=\om_a\cdot\rho^a\in\Om^1_V$, $\chi=\chi_\mu\cdot\vte^\mu\in\Om^1_H$, 
then 
\begin{itemize} 
	\item 
		$d_V\om=d_V\om_a\w\rho^a+\om_a\cdot d_V\rho^a
		=\p_{[a}\om_{b]}\cdot\rho^a\w\rho^b$, \\
		where $\p_{[a}\om_{b]}=\frac12\big(\p_a\om_b-\p_b\om_a\big)$; 
	\item 
		$d_H\om=d_H\om_a\w\rho^a+\om_a\cdot d_H\rho^a
		=(-D_\mu\om_a+\Ga^b_{\mu a}\cdot\om_b)\cdot\rho^a\w\vte^\mu$; 
	\item 
		$d_V\chi=d_V\chi_\mu\w\vte^\mu+\chi_\mu\cdot d_V\vte^\mu
		=\p_a\chi_\mu\cdot\rho^a\w\vte^\mu$ ; 
	\item 
		$d_H\chi=d_H\chi_\mu\w\vte^\mu+\chi_\mu\cdot d_H\vte^\mu
		=\p_{[\mu}\chi_{\nu]}\cdot\vte^\mu\w\vte^\nu$, \\
		where $\p_{[\mu}\chi_{\nu]}=\frac12\big(\p_\mu\chi_\nu-\p_\nu\chi_\mu\big)$.
\end{itemize} 

In particular, the bicomplex $\{\Om^{pq};d^{pq}_V,d^{pq}_H\}$ is defined 
with the cohomology spaces 
$H^{pq}_V=\ke d^{pq}_V\big/\im d^{p-1,q}_V$ and 
$H^{pq}_H=\ke d^{pq}_H\big/\im d^{p,q-1}_H$, $p,q\in\bbZ_+$. 

\subsection{The basic ingredients.}
We need the bottom-right corner of the above bicomplex: 
\begin{diagram}
        &                               &\vdots                       &                               &\vdots                   &     &  \\
        &                               &\uTo^{d^{1,m-1}_V}&                                &\uTo^{d^{1m}_V}&     &  \\                         
\dots&\rTo^{d^{1,m-2}_H}&\Om^{1,m-1}           &\rTo^{d^{1,m-1}_H}&\Om^{1m}           &\rTo&0\\
        &                               &\uTo^{d^{0,m-1}_V}&                                &\uTo^{d^{0m}_V}&     &  \\
\dots&\rTo^{d^{0,m-2}_H}&\Om^{0,m-1}           &\rTo^{d^{0,m-1}_H}&\Om^{0m}           &\rTo&0\\
        &                               &\uTo                         &                                &\uTo                     &      &  \\
        &                              &0                               &                                &0                          &     &
\end{diagram}
Let us set 
$\bvte=\vte^1\w\dots\w\vte^m$, 
$\bvte_\mu=(-1)^{\mu-1}\vte^1\w\dots\ck{\vte^\mu}\dots\w\vte^m$, 
so $\vte^\mu\w\bvte_\nu=\de^\mu_\nu\bvte$, $\mu,\nu\in\M$. 
With these notations we have: 
\begin{itemize} 
	\item 
		$\Om^{0,m-1}=\{\om=\psi^\mu\cdot\bvte_\mu\mid \psi^\mu\in\cF\}\simeq\cF^\M$; 
	\item 
		$\Om^{0m}=\{\om=L\cdot\bvte\mid L\in\cF\}\simeq\cF$; 
	\item 
		$\Om^{1,m-1}
		=\big\{\om=\chi^\mu_a\cdot\rho^a\w\bvte_\mu \ \big| \  
		\bchi=(\chi^\mu_a)\in\oF^\M_\bA\big\}\simeq\oF^\M_\bA$; 
	\item 
		$\Om^{1m}=\big\{\om=f_a\cdot\rho^a\w\bvte \ \big| \ 
		\mbf=(f_a)\in\oF_\bA\big\}\simeq\oF_\bA$. 
\end{itemize}
Further, 
\begin{itemize} 
	\item 
		$d^{0,m-1}_H : \Om^{0,m-1}\to\Om^{0m}$, 
		$\om=\psi^\mu\cdot\bvte_\mu\mapsto d_H\om=\Div\psi\cdot\bvte$; 
	\item 
		$d^{1,m-1}_H : \Om^{1,m-1}\to\Om^{1m}$, 
		$\om=\chi^\mu_a\cdot\rho^a\w\bvte_\mu\mapsto d_H\om=(\na^*\bchi)_a\cdot\rho^a\w\bvte$; 
	\item 
		$d^{0m}_V : \Om^{0m}\to\Om^{1m}$, 
		$\om=L\cdot \bvte\mapsto d_V\om=(\p L)_a\cdot\rho^a\w\bvte$; 		
\end{itemize}
where 
\begin{itemize} 
	\item 
		$\Div=-D^* : \cF^\M\to\cF$, \quad $\psi=(\psi^\mu)\mapsto\Div\psi=D_\mu\psi^\mu$ \\ 
		\big($D : \cF\to\cF_\M$, \quad $L\mapsto\zeta=(\zeta_\mu), \quad \zeta_\mu=D_\mu L$\big); 
	\item 
		$\na^* : \oF^\M_\bA\to\oF_\bA$, \quad $\bchi=(\chi^\mu_a)\mapsto\na^*\bchi$, 
		\quad $(\na^*\bchi)_a=-D_\mu\chi^\mu_a+\Ga^b_{\mu a}\cdot\chi^\mu_b$; 
	\item 
		$\p :  \cF\to\oF_\bA$, \quad  $L\mapsto\p L$, \quad $(\p L)_a=\p_a L$.
\end{itemize} 
Thus, 
\begin{itemize} 
	\item 
		$H^{0m}_H=\Om^{0m}\big/d^{0,m-1}_H\Om^{0,m-1}=\cF\big/\Div\cF^\M$, 
		\quad $\Div\cF^\M=\im\Div$; 
	\item 
		$H^{1m}_H=\Om^{1m}\big/d^{1,m-1}_H\Om^{1,m-1}
		=\oF_\bA\big/\na^*\oF^\M_\bA$,
		\quad $\na^*\oF^\M_\bA=\im\na^*$; 
	\item 
		$[d^{0m}_V] : H^{0m}_H\to H^{1m}_H$, 	\quad  
		$[\om]\mapsto[d_V][\om]=[d_V\om]$, \\
		i.e. $[d^{0m}_V] : \cF\big/\Div\cF^\M\to\oF_\bA\big/\na^*\oF^\M_\bA$, \quad 
		$[L]\mapsto[\p L]$.
\end{itemize} 
We shall use the notation 
\begin{itemize} 
	\item 
		$\fF=H^{0m}_H=\cF\big/\Div\cF^\M
		       =\big\{\Int L=[L]=L+\Div\cF^\M \ \big| \ L\in\cF\big\}$; 
	\item 
		$\cS=H^{1m}_H=\oF_\bA\big/\na^*\oF^\M_\bA
		       =\big\{[\mbf]=\mbf+\na^*\oF^\M_\bA \ \big| \ \mbf\in\oF_\bA \}$; 
	\item 
		$\bde=[d^{0m}_V] : \fF\to\cS$, \quad $\Int L\mapsto\bde\Int L
		=[\p L]=\p L+\na^*\oF^\M_\bA$. 
\end{itemize}
There are the natural pairings 
\begin{itemize} 
	\item 
		$\lan\cdot,\cdot\ran : \oF_\bA\times\cF^\bA\to\cF$, \quad
		$\mbf=(f_a), \ \bphi=(\phi^a)\mapsto\lan\mbf,\bphi\ran=f_a\cdot\phi^a$; 
	\item 
		$\lan\cdot,\cdot\ran : \oF^\M_\bA\times\cF^\bA_\M\to\cF$, \quad 
		$\bchi=(\chi^\mu_a), \ \bet=(\eta^a_\mu)\mapsto\lan\bchi,\bet\ran=\chi^\mu_a\cdot\eta^a_\mu$;  
\end{itemize} 
and the linear mapping (the natural projection in the exact sequence of the  $\bbF$-linear spaces 
$0\to\Div\cF^\M\to\cF\to\fF\to0$) 
\begin{itemize} 
	\item 
		$\int : \cF\to\fF, \quad L\mapsto\Int L=L+\Div\cF^\M$.
\end{itemize} 

It is to verify the following two statements. 

\begin{prop} 
For any $\bphi\in\bcE$ the image $\ev_{\bphi}\Div\cF^\M\subset\Div\cF^\M$, 
in particular the mapping $\ev_{\bphi} : \fF\to\fF$ is defined by the rule: 
$\ev_{\bphi}\Int K=\Int\ev_{\bphi} K$ for any $K\in\cF$.
\end{prop}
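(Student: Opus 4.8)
The plan is to reduce the entire statement to the single fact that $\ev_{\bphi}$ is an \emph{evolutionary} differentiation, i.e. that it commutes with each $D_\mu$. First I would establish this commutation. It is exactly the content of Proposition 1 (which identifies $\ev_{\bphi}$ as an element of $\fD_E$), but it is also immediate by direct computation: using the matching condition $[D_\mu,g\cdot Y]=(D_\mu g)\cdot Y+g\cdot[D_\mu,Y]$ together with $[D_\mu,\p_a]=\Ga^b_{\mu a}\cdot\p_b$ from Assumption 1, one finds
\begin{equation*}
	[D_\mu,\ev_{\bphi}]=[D_\mu,\phi^a\cdot\p_a]
	=(D_\mu\phi^a+\Ga^a_{\mu b}\cdot\phi^b)\cdot\p_a=\eta^a_\mu\cdot\p_a,
\end{equation*}
and the right-hand side vanishes precisely because $\bphi\in\bcE=\ke\na$ forces $\eta^a_\mu=0$. (Assumption 2 ensures that the formally infinite sum $\phi^a\cdot\p_a$ acts on $\cF$ as a genuine finite sum, so no convergence subtlety intervenes.)

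The heart of the argument is then a one-line commutation. I would take an arbitrary element of $\Div\cF^\M$, say $K=\Div\psi=D_\mu\psi^\mu$ with $\psi=(\psi^\mu)\in\cF^\M$, and push $\ev_{\bphi}$ through the divergence:
\begin{equation*}
	\ev_{\bphi}\,\Div\psi=\ev_{\bphi}(D_\mu\psi^\mu)
	=D_\mu(\ev_{\bphi}\psi^\mu)=\Div(\ev_{\bphi}\psi),
\end{equation*}
where $\ev_{\bphi}\psi=(\ev_{\bphi}\psi^\mu)\in\cF^\M$, each component lying in $\cF$. This exhibits $\ev_{\bphi}K$ again as a divergence, which is exactly the asserted inclusion $\ev_{\bphi}\Div\cF^\M\subset\Div\cF^\M$.

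The ``in particular'' clause is then purely formal. Since $\ev_{\bphi}$ is an $\bbF$-linear endomorphism of $\cF$ that, by the inclusion just proved, preserves the subspace $\Div\cF^\M$, it descends to a well-defined $\bbF$-linear map on the quotient $\fF=\cF\big/\Div\cF^\M$. The induced map is by construction $\ev_{\bphi}\Int K=\Int\ev_{\bphi}K$, and its well-definedness amounts to checking that $K-K'\in\Div\cF^\M$ implies $\ev_{\bphi}(K-K')\in\Div\cF^\M$ — which is again the inclusion from the previous step applied to $K-K'$.

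I do not expect a genuine obstacle here: all the content is carried by the evolutionary property $[D_\mu,\ev_{\bphi}]=0$, and once that is in hand the rest is the standard fact that an operator preserving a subspace descends to the quotient. The only two points deserving a moment's care are the verification of that commutation (equivalently, the appeal to Proposition 1) and the bookkeeping that $\ev_{\bphi}\psi$ is a legitimate element of $\cF^\M$; both are routine given Assumptions 1 and 2.
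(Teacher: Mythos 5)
Your proof is correct and follows exactly the route the paper intends: the paper states this proposition as an easy verification without giving a proof, and the whole content is indeed the evolutionary property $[D_\mu,\ev_{\bphi}]=0$ (Proposition 1 / your direct computation via $[D_\mu,\p_a]=\Ga^b_{\mu a}\cdot\p_b$), after which $\ev_{\bphi}\Div\psi=\Div(\ev_{\bphi}\psi)$ and the descent to the quotient $\fF=\cF/\Div\cF^\M$ are immediate. No gaps.
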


\begin{prop} 
The mapping $\na^* : \oF^\M_\bA\to\oF_\bA$ is the {\it Lagrange dual} 
for the mapping $\na : \cF^\bA\to\cF^\bA_\M$, 
i.e., for any $\bchi\in\oF^\M_\bA$ and $\bphi\in\cF^\bA$ 
the {\it Green formula} $\lan\bchi, \na\bphi\ran-\lan\na^*\bchi,\bphi\ran=\Div\psi$ 
holds, where $\psi=(\psi^\mu)\in\cF^\M$, $\psi^\mu=\chi^\mu_a\cdot\phi^a$ 
(in other words, $\Int(\lan\bchi, \na\bphi\ran-\lan\na^*\bchi,\bphi\ran)=0$).
\end{prop}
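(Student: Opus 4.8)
The plan is to verify the Green formula by a direct expansion of both pairings, after which everything collapses under a single application of the Leibniz rule for the commuting differentiations $D_\mu$. First I would substitute the definition of $\na$ into the left pairing,
\begin{equation*}
  \lan\bchi,\na\bphi\ran=\chi^\mu_a\cdot\big(D_\mu\phi^a+\Ga^a_{\mu b}\cdot\phi^b\big)
  =\chi^\mu_a\cdot D_\mu\phi^a+\chi^\mu_a\cdot\Ga^a_{\mu b}\cdot\phi^b,
\end{equation*}
and the definition of $\na^*$ into the right pairing,
\begin{equation*}
  \lan\na^*\bchi,\bphi\ran=\big(-D_\mu\chi^\mu_a+\Ga^b_{\mu a}\cdot\chi^\mu_b\big)\cdot\phi^a
  =-D_\mu\chi^\mu_a\cdot\phi^a+\Ga^b_{\mu a}\cdot\chi^\mu_b\cdot\phi^a.
\end{equation*}

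Next I would form the difference and treat the two groups of terms separately. The two $\Ga$-terms cancel: in $\chi^\mu_a\cdot\Ga^a_{\mu b}\cdot\phi^b$ I swap the dummy labels $a$ and $b$ and use commutativity of $\cF$ to rewrite it as $\Ga^b_{\mu a}\cdot\chi^\mu_b\cdot\phi^a$, which is exactly the $\Ga$-term arising from $\lan\na^*\bchi,\bphi\ran$. What survives is
\begin{equation*}
  \lan\bchi,\na\bphi\ran-\lan\na^*\bchi,\bphi\ran
  =\chi^\mu_a\cdot D_\mu\phi^a+\big(D_\mu\chi^\mu_a\big)\cdot\phi^a.
\end{equation*}
By the Leibniz rule applied to each $D_\mu$ this is precisely $D_\mu\big(\chi^\mu_a\cdot\phi^a\big)=D_\mu\psi^\mu=\Div\psi$ with $\psi^\mu=\chi^\mu_a\cdot\phi^a$, which is the asserted formula. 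The parenthetical reformulation then follows at once, since $\int\com\Div=0$ by the very definition of $\fF=\cF\big/\Div\cF^\M$ and of the projection $\int$.

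The algebra itself is routine; the points that require care are purely bookkeeping. I would check that every formally infinite sum is in fact finite: since $\bchi\in\oF^\M_\bA$, for each $\mu$ only finitely many $\chi^\mu_a$ are nonzero and $\M$ is finite, so the contractions over $a$ — and the auxiliary sum over $b$, controlled by $\bchi$ together with the finiteness built into $\Ga\in\oF^\bA_{\M\bA}$ — all have finite support. This simultaneously confirms that $\na^*\bchi\in\oF_\bA$ and $\psi\in\cF^\M$, so that $\Div\psi$ makes sense and the pairings are well defined. The only genuinely substantive step is the index relabeling in the $\Ga$-terms, where one must be sure the same summation ranges are being compared; I do not anticipate any real obstacle beyond this careful handling of dummy indices.
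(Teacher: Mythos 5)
Your computation is correct and is exactly the direct verification the paper has in mind (it states the proposition as something "to verify" and omits the details): expand both pairings, cancel the two $\Ga$-terms by relabeling the dummy indices $a\leftrightarrow b$, and recognize the remaining two terms as $D_\mu(\chi^\mu_a\cdot\phi^a)$ via the Leibniz rule. Your finiteness bookkeeping, using $\bchi\in\oF^\M_\bA$ and $\Ga\in\oF^\bA_{\M\bA}$, is also the right justification that all contractions and $\na^*\bchi\in\oF_\bA$ are well defined.
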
  

We shall need the dual $\bbF$-linear space 
\begin{align*} 
	\cS^*=\Hom_\bbF(\cS;\fF)
		  &=\big\{\bphi\in\cF^\bA \ \big| \ \Int\lan\na^*\oF^\M_\bA,\bphi\ran=0\big\} \\
		  &=\big\{\bphi\in\cF^\bA \ \big| \ \Int\lan\oF^\M_\bA,\na\bphi\ran=0\big\}=\bcE, 
\end{align*}
the last equality due to the du Bois-Reymond property of the differential algebra $(\cF,D)$ 
(remind, $\bcE=\ke\na$). 
Note, that for any  $\bphi\in\cF^\bA$ the $\bbF$-linear mapping $\bphi : \cS\to\fF$ 
is defined by the rule: 
$[\mbf]\mapsto\Int\lan\mbf,\bphi\ran$ for all $\mbf\in\oF_\bA$. 
In particular, there is defined the pairing 
\begin{equation*} 
	\cS\times\bcE\to\fF, \quad [\mbf],\bphi\mapsto\Int\lan\mbf,\bphi\ran=\Int f_a\cdot\phi^a.
\end{equation*}

\section{The Lie\,-Poisson structure.} 

\subsection{Definition.}
The {\it Lie-Poisson structure} over the differential algebra $(\cF,D)$
is a bilinear mapping ({\it Lie-Poisson bracket})
\begin{equation*} 
	\{\cdot,\cdot\} : \fF\times\fF\to\fF, \quad \Int K, \Int L\mapsto\big\{\Int K,\Int L\big\}, 
\end{equation*}
with the properties: 
\begin{itemize} 
	\item
		$\big\{\Int K,\Int L\big\}+\big\{\Int L,\Int K\big \}=0$; 
		\hfill ({\it skew-symmetry})
	\item
		$\bfJ\bfI\big(\Int K,\Int L,\Int M\big)
			=\big\{\Int K,\big\{\Int L,\Int M\big\}\big\} +\text{c.p.}=0$; 
			\hfill ({\it Jacobi identity})
\end{itemize}
where the abbreviation ``c.p.'' stands for the cyclic permutation 
of arguments $\Int K,\Int L,\Int M\in\fF$. 
In this case, the pair $(\fF,\{\cdot,\cdot\})$ is a Lie algebra. 
\begin{rem}
The Lie-Poisson bracket is not a Poisson bracket in the proper sense, 
it lacks the {\it product rule} property: 
\begin{equation*}
	\{\Int K,\Int L\cdot\Int M\}=\{\Int K,\Int L\}\cdot\Int M+\Int L\cdot\{\Int K,\Int M\}, 
\end{equation*}
the product  $\Int K\cdot\Int L$ is not even defined in the $\bbF$-linear space $\fF$. 
\end{rem}

\begin{defi}
We define the bracket $\{\cdot,\cdot\} : \fF\times\fF\to\fF$ as follows: 
\begin{equation*} 
	\{\Int K,\Int L\}=\Int\lan\bde K,\bLa\bde L\ran
		=\Int\lan\p K,\bLa\p L\ran, \quad \Int K,\Int L\in\fF,
\end{equation*}
where the $\bbF$-linear mapping 
$\bLa : \oF_\bA\to\cF^\bA$,  $\mbf\mapsto\bphi=\bLa\mbf$, has the properties: 
$\bLa\com\na^*=0$ and $\na\com\bLa=0$. 
\end{defi} 
The definition is correct due to the last two equalities, because in this case 
\begin{equation*} 
	\Int\lan\mbf+\na^*\oF^\M_\bA,\bLa(\bg+\na^*\oF^\M_\bA)\ran
	=\Int\lan\mbf,\bLa\bg\ran \quad\text{for all}\quad \mbf,\bg\in\oF_\bA.
\end{equation*} 

\begin{prop} 
For all $K,L\in\cF$ the representation 
\begin{equation*} 
	\Int\lan\p K,\bLa\p L\ran=\Int\ev_{\bphi(L)}K, \quad \bphi(L)=\bLa\p L\in\bcE, 
\end{equation*}
holds.
\end{prop}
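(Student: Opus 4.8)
The plan is to show that the claimed identity already holds at the level of $\cF$, before the projection $\int$ is applied, so that the equality in $\fF$ follows by a single application of $\int$. First I would verify that $\bphi(L)$ indeed lies in $\bcE$, so that the right-hand side is even defined. By Assumption 2 the tuple $\p L=(\p_a L)$ has finite support, hence $\p L\in\oF_\bA$ and $\bphi(L)=\bLa\p L\in\cF^\bA$ is well defined. The defining relation $\na\com\bLa=0$ then gives $\na\bphi(L)=(\na\com\bLa)\p L=0$, so $\bphi(L)\in\ke\na=\bcE$; in particular $\ev_{\bphi(L)}$ is the evolutionary differentiation $\phi^a(L)\cdot\p_a$ furnished by Proposition 1, and $\Int\ev_{\bphi(L)}K$ is meaningful.

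Next I would expand both sides componentwise. Writing $\bphi(L)=(\phi^a(L))$, the definition of the pairing $\lan\mbf,\bphi\ran=f_a\cdot\phi^a$ gives $\lan\p K,\bLa\p L\ran=(\p_a K)\cdot\phi^a(L)$, while the derivation formula $\ev_{\bphi}=\phi^a\cdot\p_a$ of Proposition 1 gives $\ev_{\bphi(L)}K=\phi^a(L)\cdot\p_a K$. Since $\cF$ is commutative these two elements of $\cF$ coincide term by term, and therefore $\lan\p K,\bLa\p L\ran=\ev_{\bphi(L)}K$ as an \emph{equality in $\cF$}. Applying the $\bbF$-linear map $\int:\cF\to\fF$ to this equality yields exactly $\Int\lan\p K,\bLa\p L\ran=\Int\ev_{\bphi(L)}K$.

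The only genuinely substantive point --- the closest thing to an obstacle --- is the well-definedness of the formally infinite sums over $a\in\bA$. Here Assumption 2 is used twice: because $\p K=(\p_a K)\in\oF_\bA$ has finite support, the contraction $(\p_a K)\cdot\phi^a(L)$ is a true finite sum, and likewise $\phi^a(L)\cdot\p_a K$ is finite since $\p_a K\ne0$ for only finitely many indices $a$. Thus both sides name the same finite sum in $\cF$ and no convergence issue arises. I would also note in passing that this proposition is precisely what validates the Definition: it rewrites the bracket $\{\Int K,\Int L\}=\Int\lan\p K,\bLa\p L\ran$ as the directional action $\Int\ev_{\bphi(L)}K$ of the evolutionary field associated with $L$, the representative-independence of which was already recorded in the correctness remark following the Definition.
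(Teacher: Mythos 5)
Your proof is correct: the containment $\bphi(L)\in\bcE$ follows from $\na\com\bLa=0$, and the identity itself is just the commutativity of $\cF$ applied to the finite sum $(\p_a K)\cdot\phi^a(L)=\phi^a(L)\cdot\p_a K$, with Assumption~2 guaranteeing finiteness. The paper states this proposition without proof, and your direct componentwise verification is exactly the intended argument.
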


The skew-symmetry and the Jacobi identity impose additional restrictions on the mapping $\bLa$.  

The skew-symmetry reduces to the property 
\begin{equation*}
	\Int(\lan\p K,\bLa\p L\ran+\lan\p L,\bLa\p K\ran)=0 \quad\text{for all}\quad K,L\in\cF,
\end{equation*}
it surely will be performed if the mapping $\bLa$ is the Lagrange skew-adjoint, i.e. if 
$\Int(\lan\mbf,\bLa\bg\ran+\lan\bg,\bLa\mbf\ran)=0$ for all $\mbf,\bg\in\oF_\A$. 
The last condition is also necessary if the image $\im\p=\p\cF\subset\oF_\bA$ 
is of the du Bois-Reymond type in the proper sense.  

The Jacobi identity is much more complicate. 

First let us recall one simple fact from the cohomologies of Lie algebras. 
Let $V$ be a $\bbF$-linear space and 
\begin{equation*}
	\Om(V)=\oplus_{q\in\bbZ_+}\Om^q(V), \quad 
	\Om^0(V)=V, \quad \Om^q(V)=\Hom_\bbF(\w^q_\bbF V;V), \quad q\in\bbN, 
\end{equation*}
be the graded linear space of forms over $V$ with values in $V$. 
Suppose that a skew-symmetric bilinear operation 
$[\cdot,\cdot] : V\times V\to V$ is defined. 
Let us define the graded linear endomorphism 
\begin{equation*}
	d\in\End_\bbF(\Om(V)), \quad d\big|_{\Om^q(V)} : \Om^q(V)\to\Om^{q+1}(V), 
	\quad \om\to d\om, 
\end{equation*}
by the Cartan formula: 
\begin{align*} 
	d\om(u_0,\dots,u_q)
		&=\frac1{q+1}\bigg\{
			\sum_{0\le r\le q}(-1)^r\big[u_r,\om(u_0,\dots\ck{u_r}\dots,u_q)\big] \\
		&+\sum_{0\le r<s\le q}(-1)^{r+s}\om([u_r,u_s],u_0,\dots\ck{u_r}\dots\ck{u_s},\dots,u_q)	
		                      \bigg\},
\end{align*}
where $u_0,\dots,u_q\in V$. 
Then {\it the endomorphism $d$ is an exterior differential, i.e., $d\com d=0$, iff the Jacobi identity 
\begin{equation*} 
	\bfJ\bfI(u,v,w)=[u,[v,w]]+\text{c.p.}=0, \quad u,v,w\in V, 
\end{equation*} 
holds} (i.e., $V$ has the structure of a Lie algebra). 

In particular, let $u\in V=\Om^0(V)$ then 
\begin{equation*}
	du(v)=[v,u] \quad \text{and}\quad ((d\com d)u)(v,w)=\frac12\bfJ\bfI(u,v,w), 
		\quad v,w\in V. 
\end{equation*}

In our case, $V=\fF$ and $[\cdot,\cdot]=\{\cdot,\cdot\}$. 
Thus, the following statement is valid. 
\begin{prop}
The Jacoby identity for the skew-symmetric bilinear mapping 
$\{\cdot,\cdot\} : \fF\times\fF\to\fF$,  $\Int K, \Int L\mapsto\big\{\Int K,\Int L\big\}$ 
holds iff the corresponding linear mapping $d : \Om(\fF)\to\Om(\fF)$ 
is the exterior differential. 
\end{prop}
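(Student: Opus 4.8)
The plan is to read this Proposition as the specialization, to $V=\fF$ and $[\cdot,\cdot]=\{\cdot,\cdot\}$, of the cohomological fact recalled immediately above. By hypothesis $\{\cdot,\cdot\}:\fF\times\fF\to\fF$ is a skew-symmetric bilinear operation, so the Cartan formula does define an endomorphism $d\in\End_\bbF(\Om(\fF))$ of the graded space $\Om(\fF)=\oplus_q\Om^q(\fF)$, and the statement to be proved is exactly that $d\com d=0$ holds iff $\bfJ\bfI(\cdot,\cdot,\cdot)=0$ on $\fF$. Thus no new structure is needed beyond what the general setup provides; the work is to justify the recalled equivalence.

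The direction ``$d$ exterior differential $\Rightarrow$ Jacobi'' is immediate from the degree-zero computation quoted above. For $u\in\fF=\Om^0(\fF)$ one has $du(v)=\{v,u\}$, and a direct expansion of the two Cartan formulas gives
\[
 ((d\com d)u)(v,w)=\tfrac12\,\bfJ\bfI(u,v,w),\qquad v,w\in\fF,
\]
where the skew-symmetry of $\{\cdot,\cdot\}$ is used to recombine the three nested brackets into $\bfJ\bfI(u,v,w)$. Hence $d\com d=0$ forces the right-hand side to vanish for all $u,v,w\in\fF$, i.e. the Jacobi identity holds.

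The converse direction is where the real content lies: one must show that the single identity $\bfJ\bfI=0$ annihilates $d\com d$ on every $\Om^q(\fF)$, not merely on $\Om^0(\fF)$. I would fix $\om\in\Om^q(\fF)$, evaluate $(d\com d)\om$ on $q+2$ arguments $u_0,\dots,u_{q+1}$, expand the two applications of the Cartan formula, and sort the resulting terms by type: the doubly-nested terms $[u_r,[u_s,\om(\cdots)]]$, the terms carrying an outer bracket $[[u_r,u_s],u_t]$ fed into $\om$, and the mixed terms $[u_r,\om([u_s,u_t],\cdots)]$. Using the antisymmetry of $\om$ and careful sign/index bookkeeping, the mixed terms cancel pairwise, while the remaining terms assemble, for each unordered triple $\{r,s,t\}$, into a copy of $\bfJ\bfI(u_r,u_s,u_t)$ placed in an argument slot of $\om$; under $\bfJ\bfI=0$ every such copy vanishes, giving $d\com d=0$. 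This combinatorial matching of the two nested summations---keeping track of the signs $(-1)^{r+s}$ and the omitted (``checked'') arguments---is the only genuinely laborious step and hence the main obstacle; the rest of the argument is formal.
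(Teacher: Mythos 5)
Your proposal is correct and follows exactly the paper's route: the paper offers no separate proof of this Proposition, simply observing that it is the specialization to $V=\fF$, $[\cdot,\cdot]=\{\cdot,\cdot\}$ of the recalled cohomological fact that $d\com d=0$ is equivalent to the Jacobi identity. Your additional sketch of that fact --- the degree-zero computation $((d\com d)u)(v,w)=\tfrac12\bfJ\bfI(u,v,w)$ for one direction and the standard double expansion of the Cartan formula for the other --- is the standard argument and merely supplies detail the paper leaves implicit.
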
  

To get any practical results concerning the Jacobi identity 
we need additional technical instruments and  assumptions.

\subsection{Linear differential operators.}
For a multi-index $i=(i^1,\dots, i^m)\in\bbI$ we set $D^i=(D_1)^{i^1}\dots(D_m)^{i^m}$. 

We denote by 
\begin{equation*}
	\cF[\D]=\big\{P(D)=P_i\cdot D^i \ \big| \ P=(P_i)\in\oF_\bbI\big\} 
\end{equation*}
the unital associative algebra of all polynomials  ({\it differential polynomials}) 
in the indeterminates $D=(D_1,\dots,D_m)$ with coefficients in $\cF$. 
Every polynomial  $P(D)\in\cF[D]$ defines the linear mapping  
({\it linear differential operator}) 
\begin{equation*}
	P(D) : \cF\to\cF, \quad L\mapsto P(D)L=P_i\cdot D^i L. 
\end{equation*}
{\it The multiplication in $\cF[D]$ is defined by the composition rule 
of the differential operators. }

In the same way, for index sets $\bA,\bB$ we denote by 
\begin{equation*}
	\cF^\bA_\bB[D]=\big\{P(D)=P_i\cdot D^i=(P^a_{ib}\cdot D^i) 
	\ \big| \ P=(P^a_{ib})\in\oF^\bA_{\bbI\bB}\big\} 
\end{equation*}
the $\cF[D]$-module of all polynomials in the indeterminates 
$D=(D_1,\dots,D_m)$ with coefficients in $\cF^\bA_\bB$. 
Every polynomial $P(D)\in\cF^\bA_\bB[D]$ defines the linear mappings: 
\begin{align*}
	&P(D) : \cF^\bB\to\cF^\bA, \quad \bpsi=(\psi^b)\mapsto\bphi=(\phi^a), 
		\quad \phi^a=P^a_{ib}\cdot D^i\psi^b; \\
	&P(D) : \oF_\bA\to\oF_\bB, \quad \mbf=(f_a)\mapsto\bg=(g_b), 
		\quad g_b=P^a_{ib}\cdot D^if_a. 
\end{align*} 
For example, 
\begin{itemize} 
	\item 
		$D\in\cF_\M[D], \quad D : \cF\to\cF_\M, \quad K\mapsto DK, 
		\quad (DK)_\mu=D_\mu K, \quad \mu\in\M$; 
	\item 
		$\Div\in\cF_\M[D], \quad\Div : \cF^\M\to\cF, \quad 
		\psi=(\psi^\mu)\mapsto\Div\psi=D_\mu\psi^\mu$; 
	\item 
		$\na\in\cF^\bA_{\M\bA}, \quad 
		\na : \cF^\bA\to\cF^\bA_\M, \quad \bphi=(\phi^a)\mapsto\bet=(\eta^a_\mu), \\
		 \eta^a_\mu=D_\mu\phi^a+\Ga^a_{\mu b}\cdot\phi^b$; 
	\item 
		$\na^*\in\cF^\bA_{\M\bA}, \quad
		\na^* : \oF^\M_\bA\to\oF_\bA, \quad \bchi=(\chi^\mu_a)\mapsto\mbf=(f_a), \\
		\quad f_a=-D_\mu\chi^\mu_a+\Ga^b_{\mu a}\cdot\chi^\mu_b$. 
\end{itemize} 
Every differential  polynomial 
\begin{equation*}
	P(D)=P_i\cdot D^i : \cF\to\cF, \quad L\mapsto P(D)L=P_i\cdot D^i L,
\end{equation*}
has the {\it Lagrange dual} polynomial 
\begin{equation*} 
	P^*(D)=P^*_i\cdot D^i=(-D)^i\com P_i : \cF\to\cF, \quad 
	K\mapsto P^*(D)K=(-D)^i\big(P_i\cdot K\big), 
\end{equation*}
and the {\it Green's formula} 
\begin{equation*} 
	K\cdot P(D)L-P^*(D)K\cdot L=\Div\psi  
 \end{equation*} 
 holds, where the ``current'' $\psi=(\psi^\mu)\in\cF^\M$, $\psi^\mu=\psi^\mu(P,K,L)$, 
 can be calculated explicitly using the {\it integration by parts} method. 
 In the general case, when the polynomial $P(D)\in\cF^\bA_\bB[D]$ the situation is similar. 
 For example, let 
 \begin{equation*} 
 	P(D)=\big(P^{ab}_i\cdot D_i\big) : \oF_\bA\to\cF^\bA, \quad 
 	\bg=(g_a)\mapsto\bphi=(\phi^a),   
 \end{equation*}
 where $\phi^a=(P(D)\bg)^a=P^{ab}_i\cdot D^i g_b$ (here, $a,b\in\bA$). 
 Then the Lagrange dual polynomial 
 \begin{equation*} 
 	P^*(D)=\big((-D)^i\com P^{ba}_i\big) : \oF_\bA\to\cF^\bA, \quad 
 	\mbf=(f_a)\mapsto\bxi=(\xi^a), 
 \end{equation*}
 where $\xi^a=(P^*(D)\mbf)^a=(-D)^i\big(P^{ba}_i\cdot f_b\big)$, 
 and the Green's formula takes the form 
 \begin{equation*} 
 	\lan\mbf,P(D)\bg\ran-\lan P^*(D)\mbf,\bg\ran=\Div\psi,
 \end{equation*}
 the ``current'' $\psi\in\cF^\M$ is calculated explicitly. 
 
 In particular, $\na^*$ is the Lagrange dual for $\na$, 
 while $\Div$ is the Lagrange dual for $-D$.  
 
 \begin{prop}\label{P1} 
 For every polynomial $P(D)=\big(P^a_{ib}\cdot D^i\big) : \cF^\bB\to\cF^\bA$, 
 its Lagrange dual $P^*(D)=\big((-D)^i\com P^a_{ib}\big) : \oF_\bA\to\oF_\bB$ 
 and any $\bphi\in\bcE$ the following statements hold: 
 \begin{itemize} 
 	\item 
 		$[\ev_{\bphi}, P(D)]=\ev_{\bphi}P(D)=\big((\ev_{\bphi}P^a_{ib})\com D^i\big)$; 
	\item 
		$[\ev_{\bphi},P^*(D)]=\ev_{\bphi}P^*(D)
			=\big((-D)^i\com(\ev_{\bphi}P^a_{ib})\big)=[\ev_{\bphi},P(D)]^*$, 
 \end{itemize} 
 i.e., the evolutionary differentiation $\ev_{\bphi}$ acts coefficient-wise. 
 \end{prop}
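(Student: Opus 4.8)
The plan is to reduce everything to the single structural fact that $\ev_{\bphi}$ is an evolutionary differentiation, i.e.\ that it commutes with the total derivatives $D_\mu$. First I would record why this holds for $\bphi\in\bcE$: using the matching condition $[D_\mu,g\cdot Y]=(D_\mu g)\cdot Y+g\cdot[D_\mu,Y]$ together with $[D_\mu,\p_a]=\Ga^b_{\mu a}\cdot\p_b$, a direct computation gives
\begin{equation*}
[D_\mu,\ev_{\bphi}]=(D_\mu\phi^a+\Ga^a_{\mu b}\cdot\phi^b)\cdot\p_a=(\na\bphi)^a_\mu\cdot\p_a,
\end{equation*}
which vanishes precisely because $\bphi\in\bcE=\ke\na$ (equivalently, $\ev_{\bphi}$ lies in $\fD_E$). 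By induction on the order this extends to $[\ev_{\bphi},D^i]=0$ for every multi-index $i\in\bbI$, and since $(-D)^i$ differs from $D^i$ only by a sign, also $[\ev_{\bphi},(-D)^i]=0$.

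Next, for the first assertion I would apply the Leibniz rule for the differentiation $\ev_{\bphi}$ to $P(D)$ evaluated on an arbitrary argument. Writing the $a$-component of $P(D)\bpsi$ as $P^a_{ib}\cdot D^i\psi^b$, the product rule gives
\begin{equation*}
\ev_{\bphi}(P^a_{ib}\cdot D^i\psi^b)=(\ev_{\bphi}P^a_{ib})\cdot D^i\psi^b+P^a_{ib}\cdot\ev_{\bphi}(D^i\psi^b).
\end{equation*}
Since $\ev_{\bphi}$ commutes with $D^i$ by the first step, $\ev_{\bphi}(D^i\psi^b)=D^i(\ev_{\bphi}\psi^b)$, so the second term is exactly the $a$-component of $P(D)(\ev_{\bphi}\bpsi)$. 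Subtracting it leaves $[\ev_{\bphi},P(D)]=\big((\ev_{\bphi}P^a_{ib})\com D^i\big)$, which is by definition the operator $\ev_{\bphi}P(D)$ obtained by letting $\ev_{\bphi}$ act coefficient-wise.

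For the second assertion I would repeat the computation for $P^*(D):\oF_\bA\to\oF_\bB$, whose $b$-component on $\mbf=(f_a)$ is $(-D)^i(P^a_{ib}\cdot f_a)$. Commuting $\ev_{\bphi}$ through $(-D)^i$ and using the product rule in the same way yields $[\ev_{\bphi},P^*(D)]=\big((-D)^i\com(\ev_{\bphi}P^a_{ib})\big)$. It then remains to recognize the right-hand side as the Lagrange dual of the operator found in the first step: applying the rule $Q(D)=(Q^a_{ib}\cdot D^i)\mapsto Q^*(D)=\big((-D)^i\com Q^a_{ib}\big)$ to $Q(D)=[\ev_{\bphi},P(D)]$, whose coefficients are $Q^a_{ib}=\ev_{\bphi}P^a_{ib}$, reproduces exactly $\big((-D)^i\com(\ev_{\bphi}P^a_{ib})\big)$, giving $[\ev_{\bphi},P^*(D)]=[\ev_{\bphi},P(D)]^*$.

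The computations are routine once the commuting property is in hand; the only points demanding care are the inductive verification that $\ev_{\bphi}$ commutes with the higher total derivatives $D^i$ (rather than merely with each $D_\mu$), and the bookkeeping of index placement in the matrix Lagrange dual, where the direction of contraction in the coefficients $P^a_{ib}$ is effectively transposed between $P(D)$ and $P^*(D)$. I expect this last index matching to be the main place where a sign or transposition slip could creep in, so I would confirm it by writing out the Green's formula for $Q(D)$ explicitly.
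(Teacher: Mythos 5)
Your proposal is correct and follows essentially the same route as the paper, whose proof consists of the single observation that $[D_\mu,\ev_{\bphi}]=0$ for $\bphi\in\bcE$; your derivation of this from $\na\bphi=0$, the extension to $D^i$, and the coefficient-wise Leibniz bookkeeping are exactly the details the paper leaves implicit.
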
 
 \begin{proof} 
 The proof based on the characteristic property of the evolutionary differentiations: 
 $[D_\mu,\ev_{\bphi}]=0$ for all $\mu\in\M$ and $\bphi\in\bcE$.
 \end{proof}

\subsection{The additional assumptions.} 
To move further we need the following two additional assumptions. 
\begin{ass} 
We assume that there exists a linear differential operator $\bj=\bj(D) : \cF^\A\to\cF^\bA$, 
$\A$ is an index set, s.t. 
\begin{itemize}
	\item 
		the composition $\na\com\bj=0$, i.e., $\im\bj\subset\ke\na$; 
	\item 
		the commutator $[\ev_{\bphi},\bj]=0$ for any $\bphi\in\bcE$.
\end{itemize}
In more detail, $\bj(D)=(j^a_{i\al}\cdot D^i)\in\cF^\bA_\A[D]$, 
$\phi=(\phi^\al)\mapsto\bj\phi=\bphi=(\phi^a)$, $\phi^a=j^a_{i\al}\cdot D^i\phi^\al$.
\end{ass} 
We define $\cE=\cF^\A$ and $\cE^*=\oF_\A$. 

\begin{prop} 
The following statements hold: 
\begin{itemize} 
	\item 
		the Lagrange dual polynomial  $\bj^*=\big((-D)^i\com j^a_{i\al}\big)\in\cF^\bA_\A[D]$ 
		is defined, $\bj^* : \oF_\bA\to\cE^*$, 
		$\mbf=(f_a)\mapsto f=(f_\al)$, $f_\al=(-D)^i\big(j^a_{i\al}\cdot f_a\big)$; 
	\item 
		the composition $\bj^*\com\na^*=0$, i.e., $\im\na^*\subset\ke\bj^*$; 
	\item 
		the commutator $[\ev_{\bphi},\bj^*]=[\ev_{\bphi},\bj]^*=0$ for any $\bphi\in\bcE$; 
	\item 
		the linear mapping  $\de=\bj^*\com\bde=\bj^*\com\p : \fF\to\cE^*$ 
		acts by the rule: \\
		$\Int L\mapsto\de L=(\de_\al L)$, \quad 
		$\de_\al L=(-D)^i\big(j^a_{i\al}\cdot\p_a L\big)$, $\al\in\A$; 
	\item 
		$\Int\lan\de L,\phi\ran=\Int L_*\phi=\Int\ev_{\bj\phi}L$, $L\in\cF$, $\phi\in\cE$,
		where \\ $L_* : \cE\to\cF$,  
		$\phi=(\phi^\al)\mapsto L_*\phi=\p_a L\cdot j^a_{i\al}\cdot D^i\phi^\al$.
\end{itemize}
\end{prop}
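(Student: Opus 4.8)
The plan is to read off each of the five bullets as a consequence of the Lagrange-duality machinery developed above, together with the two defining properties of $\bj$ (namely $\na\com\bj=0$ and $[\ev_{\bphi},\bj]=0$). I would treat the bullets in the listed order, since each later item uses the earlier ones.

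First I would establish the first bullet: the Lagrange dual $\bj^*$ exists because $\bj(D)=(j^a_{i\al}\cdot D^i)$ is an honest differential polynomial in $\cF^\bA_\A[D]$, so the general construction $P^*(D)=(-D)^i\com P_i$ applies verbatim, giving $\bj^*=\big((-D)^i\com j^a_{i\al}\big)$ and the component formula $f_\al=(-D)^i\big(j^a_{i\al}\cdot f_a\big)$; the index bookkeeping shows $\bj^* : \oF_\bA\to\oF_\A=\cE^*$. For the second bullet I would argue directly: for arbitrary $\bchi\in\oF^\M_\bA$ and $\phi\in\cF^\A$, apply the Green formula for the pair $(\bj,\bj^*)$ and then for $(\na,\na^*)$ to obtain
\[
	\Int\lan\bj^*\na^*\bchi,\phi\ran=\Int\lan\na^*\bchi,\bj\phi\ran=\Int\lan\bchi,\na(\bj\phi)\ran=0,
\]
the last equality because $\na\com\bj=0$. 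Since this holds for every $\phi$, the du Bois-Reymond property forces $\bj^*\na^*\bchi=0$, i.e. $\bj^*\com\na^*=0$ (equivalently, this is just the dualization of $\na\com\bj=0$). The third bullet is immediate from Proposition \ref{P1} applied to $P=\bj$: its second item gives $[\ev_{\bphi},\bj^*]=[\ev_{\bphi},\bj]^*$, and $[\ev_{\bphi},\bj]=0$ makes the right-hand side vanish.

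Next I would handle the fourth bullet. The map $\bde=[d^{0m}_V] : \fF\to\cS$ is already well defined, and the second bullet says $\bj^*$ annihilates $\na^*\oF^\M_\bA$, so $\bj^*$ descends to a linear map $\cS=\oF_\bA\big/\na^*\oF^\M_\bA\to\cE^*$; hence $\de=\bj^*\com\bde$ is a well-defined map $\fF\to\cE^*$. Composing the explicit formula $\bde\Int L=[\p L]$ with the component formula for $\bj^*$ yields $\de_\al L=(-D)^i\big(j^a_{i\al}\cdot\p_a L\big)$, as claimed. Finally, for the fifth bullet I would again invoke the Green formula for $(\bj,\bj^*)$ with $\mbf=\p L$ to get $\Int\lan\de L,\phi\ran=\Int\lan\bj^*\p L,\phi\ran=\Int\lan\p L,\bj\phi\ran$; expanding the pairing gives $\lan\p L,\bj\phi\ran=\p_a L\cdot j^a_{i\al}\cdot D^i\phi^\al=L_*\phi$, and since $\na\com\bj=0$ ensures $\bj\phi\in\bcE$, the definition $\ev_{\bj\phi}=(\bj\phi)^a\cdot\p_a$ identifies $L_*\phi=\ev_{\bj\phi}L$, completing the chain $\Int\lan\de L,\phi\ran=\Int L_*\phi=\Int\ev_{\bj\phi}L$.

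The genuinely load-bearing steps, and hence the main obstacle, are the two well-definedness checks: that $\bj^*$ descends to the quotient $\cS$ (equivalently, the vanishing $\bj^*\com\na^*=0$ proved via the double application of the Green formula and du Bois-Reymond) and that $\de$ is thereby well defined on $\fF$. Everything else is formal index contraction and integration by parts; the only points requiring care are that all the formally infinite sums occurring in $\bj$, $\bj^*$ and the pairings are in fact finite by the standing finiteness assumptions, and that each equality asserted modulo $\Div\cF^\M$ really is the image of a single current $\psi$ under $\Div$.
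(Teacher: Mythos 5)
Your proposal is correct and follows exactly the route the paper indicates in its one-line proof (definitions, Proposition \ref{P1}, and the du Bois--Reymond property): the dual $\bj^*$ comes from the general Lagrange-dual construction, $\bj^*\com\na^*=0$ is the dualization of $\na\com\bj=0$ via two applications of the Green formula plus du Bois--Reymond, the commutator identity is Proposition \ref{P1} applied to $P=\bj$, and the last two bullets are the resulting well-definedness and integration-by-parts computations. Your write-up is simply a fully expanded version of the paper's intended argument, with the well-definedness of $\de$ on the quotient correctly identified as the point where the second bullet is actually needed.
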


\begin{proof}  
The proof is based on the definitions, Proposition \ref{P1} and the du Bois-Reymond 
property of the algebra $\cF$.
\end{proof}

\begin{ass} 
We assume that the mapping $\bLa : \oF_\bA\to\cF^\bA$ has the form 
$\bLa=\bj\com\La\com\bj^*$, where the skew-adjoint polynomial 
$\La=(\La^{\al\be}_i\cdot D^i)\in\cF^{\A\A}[D]$, while the Lagrange dual polynomial 
$\La^*=((-D)^i\com\La^{\al\be}_i)=-\La\in\cF^{\A\A}[D]$, and 
$\La,\La^* : \cE^*=\oF_\A\to\cE=\cF^\A$.
\end{ass} 

\begin{prop} 
Under the above assumptions $\bLa\in\cF^{\bA\bA}[D]$ 
is the skew-adjoint polynomial, and compositions $\na\com\bLa=0$, $\bLa\com\na^*=0$.
\end{prop}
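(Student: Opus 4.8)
The plan is to reduce everything to the factorization $\bLa=\bj\com\La\com\bj^*$ together with the three input facts already at hand: $\na\com\bj=0$ (from the assumption introducing $\bj$), $\bj^*\com\na^*=0$ (the proposition following that assumption), and the skew-adjointness $\La^*=-\La$ (the assumption defining $\La$). With these, the two vanishing compositions are immediate by associativity of operator composition. For the first I would write $\na\com\bLa=(\na\com\bj)\com\La\com\bj^*=0$, and for the second $\bLa\com\na^*=\bj\com\La\com(\bj^*\com\na^*)=0$. I should also record that $\bLa$ is genuinely an element of $\cF^{\bA\bA}[D]$: the matrix differential-polynomial algebras are closed under composition, and in $\bj\com\La\com\bj^*$ the two copies of the index set $\A$ (carried by $\bj$, by $\La$, and by $\bj^*$) are contracted, leaving an operator whose coefficients carry the double index $\bA\bA$.

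For skew-adjointness I would invoke two structural properties of the Lagrange duality on differential polynomials: it is an anti-homomorphism for composition, $(A\com B)^*=B^*\com A^*$, and it is an involution, $(P^*)^*=P$. Granting these, $\bLa^*=(\bj\com\La\com\bj^*)^*=(\bj^*)^*\com\La^*\com\bj^*=\bj\com(-\La)\com\bj^*=-\bLa$, which is exactly the claimed skew-adjointness. The index types are consistent: $\bj^*:\oF_\bA\to\cE^*$, then $\La^*:\cE^*\to\cE$, then $(\bj^*)^*=\bj:\cE\to\cF^\bA$, so $\bLa^*:\oF_\bA\to\cF^\bA$ has the same domain and codomain as $\bLa$, and the order reversal in the anti-homomorphism is what makes the outer $\bj$ and the inner $\bj^*$ fall back into the positions they occupy in $\bLa$ itself.

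The only points carrying actual content are the two structural properties of $*$, and I expect the anti-homomorphism, rather than any deep issue, to be the spot requiring care. Both follow from the Green's formula $\Int\lan\mbf,P(D)\bg\ran=\Int\lan P^*(D)\mbf,\bg\ran$ combined with the du Bois-Reymond property of $(\cF,D)$: applying the formula twice across a composition gives $\Int\lan\mbf,(A\com B)\bg\ran=\Int\lan(B^*\com A^*)\mbf,\bg\ran$ for all admissible $\bg$, and du Bois-Reymond upgrades equality under $\Int$ to equality of operators (since $\Int\lan\mbh,\bg\ran=0$ for all $\bg$ forces $h_a=0$ componentwise), yielding $(A\com B)^*=B^*\com A^*$; the involution $(P^*)^*=P$ is obtained the same way, or directly from the explicit formula $P^*=(-D)^i\com P_i$ via the Leibniz rule. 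There is no genuine obstacle: the proposition is essentially formal, and the main discipline needed is the consistent tracking of the upper/lower ranges $\A$ versus $\bA$ through the three compositions and the correct order reversal under duality.
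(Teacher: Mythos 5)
Your proposal is correct and is essentially the intended argument: the paper states this proposition without proof, treating it as an immediate formal consequence of Assumptions 4--5, and your factorization $\na\com\bLa=(\na\com\bj)\com\La\com\bj^*$, $\bLa\com\na^*=\bj\com\La\com(\bj^*\com\na^*)$ together with $\bLa^*=(\bj^*)^*\com\La^*\com\bj^*=-\bLa$ is exactly that consequence. You also correctly identify the one point needing justification --- the anti-homomorphism and involution properties of the Lagrange dual, obtained from the Green formula plus the du Bois-Reymond property (or directly from $P^*=(-D)^i\com P_i$) --- so nothing is missing.
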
 

\begin{prop}\label{P2}
For every $\bphi\in\bcE$ the equality $[\ev_{\bphi},\La]^*=-[\ev_{\bphi},\La]$ holds. 
\end{prop}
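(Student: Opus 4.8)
The plan is to deduce the statement from two facts: that $\La$ is skew-adjoint by assumption, and that the Lagrange-dual operation commutes with $\ev_{\bphi}$. Since $\La^*=-\La$, applying the commutator $[\ev_{\bphi},\,\cdot\,]$ to both sides gives $[\ev_{\bphi},\La^*]=-[\ev_{\bphi},\La]$, so the whole proposition reduces to the single identity
$$[\ev_{\bphi},\La^*]=[\ev_{\bphi},\La]^*.$$
Once this is in hand the claim follows at once, as the final paragraph records.

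To establish that identity I would repeat, for the index type of $\La$, the argument behind the second bullet of Proposition \ref{P1}; the underlying mechanism is the defining property $[D_\mu,\ev_{\bphi}]=0$ of an evolutionary differentiation. Writing $\La=(\La^{\al\be}_i\cdot D^i)$, so that $(\La f)^\al=\La^{\al\be}_i\cdot D^i f_\be$ for $f=(f_\be)\in\cE^*$, the differentiation $\ev_{\bphi}$ passes through each $D^i$ untouched and, by the Leibniz rule, its action on the product splits; the piece landing on $D^i f_\be$ cancels against $\La(\ev_{\bphi}f)$, leaving
$$[\ev_{\bphi},\La]=\big((\ev_{\bphi}\La^{\al\be}_i)\cdot D^i\big),$$
so $\ev_{\bphi}$ acts purely on the coefficients. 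Its Lagrange dual is $(\La^*g)^\al=(-D)^i\big(\La^{\be\al}_i\cdot g_\be\big)$, and since $\ev_{\bphi}$ commutes with $(-D)^i$ as well, the same computation applied to $\La^*$ yields $[\ev_{\bphi},\La^*]=\big((-D)^i\com(\ev_{\bphi}\La^{\be\al}_i)\big)$. This is exactly the Lagrange dual of $\big((\ev_{\bphi}\La^{\al\be}_i)\cdot D^i\big)=[\ev_{\bphi},\La]$, which is the required identity $[\ev_{\bphi},\La^*]=[\ev_{\bphi},\La]^*$.

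Assembling the pieces closes the proof:
$$[\ev_{\bphi},\La]^*=[\ev_{\bphi},\La^*]=[\ev_{\bphi},-\La]=-[\ev_{\bphi},\La].$$
I expect the only delicate point to be the index bookkeeping rather than any analytic subtlety: Proposition \ref{P1} is phrased for a mixed-index operator $\cF^\bB\to\cF^\bA$, whereas $\La$ maps $\cE^*=\oF_\A$ to $\cE=\cF^\A$ and carries two upper indices, so its dual transposes $\al$ and $\be$ in $\La^{\al\be}_i$. Keeping this transposition straight is the main thing to check, but the coefficient-wise and dual formulas for the $\cF^{\A\A}[D]$ type were already written out in the worked example preceding Proposition \ref{P1}, so the conclusion follows from it verbatim.
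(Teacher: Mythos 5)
Your proposal is correct and follows the same route as the paper, whose entire proof is the citation ``See Proposition \ref{P1}'': you combine the identity $[\ev_{\bphi},\La^*]=[\ev_{\bphi},\La]^*$ from Proposition \ref{P1} with the skew-adjointness $\La^*=-\La$ of Assumption 5, merely spelling out the coefficient-wise computation and the $\al\leftrightarrow\be$ transposition that the paper leaves implicit.
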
 
\begin{proof} 
See Proposition \ref{P1}. 
\end{proof}

\subsection{The Hamiltonian mappings.} 
We {\it assume that all five assumptions listed above hold}. 

In this case the  Lie-Poisson bracket takes the form: 
\begin{equation*} 
	\{\Int K,\Int L\}=\Int\lan\de K,\La\de L\ran  \quad\text{for all}\quad \Int K,\Int L\in\fF.
\end{equation*} 
By the construction this bracket is skew-symmetric, and our aim here to find 
a possibly simple and effective test for the skew-adjoint differential polynomial 
$\La : \cE^*\to\cE$ to be {\it Hamiltonian}, i.e., to satisfy the Jacobi identity. 

For every $R\in\cF$ we denote $\phi(R)=\La\de R\in\cE$ and $\bphi(R)=\bj\phi(R)\in\bcE$. 

\begin{lemma}\label{L1} 
For all $K,L,M\in\cF$ the following equality 
\begin{equation*} 
	\Int\lan\ev_{\bphi(K)}\de L,\phi(M)\ran=\Int\lan\ev_{\bphi(M)}\de L,\phi(K)\ran
\end{equation*}
holds.
\end{lemma}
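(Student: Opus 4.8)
The plan is to deduce the lemma from a more general symmetry that ignores the special form of $\phi(K),\phi(M)$: namely, that for \emph{all} $\phi,\psi\in\cE$ one has $\Int\lan\ev_{\bj\phi}\de L,\psi\ran=\Int\lan\ev_{\bj\psi}\de L,\phi\ran$. Specializing $\phi=\phi(K)$, $\psi=\phi(M)$ gives $\bj\phi=\bphi(K)$, $\bj\psi=\bphi(M)$ and recovers the stated equality, so it suffices to establish this symmetric form.

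First I would push the whole computation down to the $\bA$-level. Writing $\de L=\bj^*\p L$ and using $[\ev_{\bphi},\bj^*]=0$ for $\bphi\in\bcE$, I get $\ev_{\bj\phi}\de L=\bj^*(\ev_{\bj\phi}\p L)$; then the Green formula (adjointness of $\bj,\bj^*$ under $\Int$) turns the pairing into $\Int\lan\ev_{\bj\phi}\p L,\bj\psi\ran$. Setting $\Phi=\bj\phi$ and $\Psi=\bj\psi$, both in $\bcE$, and recalling $\ev_\Phi=\Phi^a\cdot\p_a$, the claim becomes the symmetry of $\Int\lan\ev_\Phi\p L,\Psi\ran$ under $\Phi\leftrightarrow\Psi$.

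Next I would expand $\ev_\Phi\p_b L$. Since the $\p_a$ commute, the Leibniz rule gives $\ev_\Phi\p_b L=\p_b(\ev_\Phi L)-(\p_b\Phi^a)\cdot\p_a L$. Substituting, $\Int\lan\ev_\Phi\p L,\Psi\ran$ splits as $\Int\ev_\Psi\ev_\Phi L-\Int\Psi^a\cdot(\p_a\Phi^b)\cdot\p_b L$, the first piece because $\Psi^a\cdot\p_a(\ev_\Phi L)=\ev_\Psi\ev_\Phi L$. Forming the difference of the two sides, the double-derivative parts combine into $\Int(\ev_\Psi\ev_\Phi-\ev_\Phi\ev_\Psi)L=\Int\ev_{[\Psi,\Phi]}L$ (the evolutionary bracket of Proposition 1), whose integrand equals $(\Psi^a\cdot\p_a\Phi^b-\Phi^a\cdot\p_a\Psi^b)\cdot\p_b L$; this is exactly the difference of the two correction terms, so everything cancels and the symmetry follows.

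The argument uses only the commutation properties $[\ev_{\bphi},\bj]=[\ev_{\bphi},\bj^*]=0$, the definition $\de=\bj^*\com\p$, and commutativity of the $\p_a$; the skew-adjointness of $\La$ is not actually needed. The points to watch are the final cancellation (the two second-order terms $\Psi^a\Phi^b\p_a\p_b L$ and $\Phi^a\Psi^b\p_a\p_b L$ must be seen to agree after relabelling, by commutativity of $\cF$ and of the $\p_a$) and a small finiteness check: one must confirm $\ev_\Phi\p L\in\oF_\bA$ so that $\bj^*$ and the pairing apply, which holds because $\p_a L=0$ forces $\p_b\p_a L=0$, hence $\ev_\Phi\p_a L=0$, for each $a$.
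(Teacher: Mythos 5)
Your proposal is correct and follows the paper's own proof in all essential respects: the same reduction $\de L=\bj^*\p L$, the same use of $[\ev_{\bphi},\bj^*]=0$, and the same Green-formula step moving $\bj^*$ across the pairing to reach $\Int\lan\ev_{\bphi(K)}\p L,\bphi(M)\ran$. The only difference is cosmetic: where the paper finishes by writing this as $\Int\phi^a(K)\cdot(\p_a\p_b L)\cdot\phi^b(M)$ and invoking the symmetry $\p_a\p_b=\p_b\p_a$ directly, you reach the same fact via a Leibniz expansion and the evolutionary commutator of Proposition~1 --- a valid but slightly longer route resting on exactly the same commutativity.
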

\begin{proof} 
Indeed, 
\begin{align*} 
	&\Int\lan\ev_{\bphi(K)}\de L,\phi(M)\ran=\Int\lan\ev_{\bphi(K)}(\bj^*\com\p)L,\phi(M)\ran 
		=([\ev_{\bphi(K)},\bj^*]=0)= \\ 
	&=\Int\lan\bj^*(\ev_{\bphi(K)}\p L),\phi(M)\ran=(\, \bj\phi(M)=\bphi(M))
		=\Int\lan\ev_{\bphi(K)}\p L,\bphi(M)\ran= \\ 
	&=\Int\phi^a(K)\cdot(\p_a\p_b L)\cdot\phi^b(M)
		=\Int\phi^b(M)\cdot(\p_b\p_a L)\cdot\phi^a(K)= \\
	&=\Int\lan\ev_{\bphi(M)}\de L,\phi(K)\ran.
\end{align*}
\end{proof}

\begin{theorem}\label{MT} 
The Jacoby identity has the following representation: 
\begin{align*} 
	\bfJ\bfI(\Int K,\Int L,\Int M)
		&=\Int\lan\de K,\La\de\lan\de L,\La\de M\ran\ran+\text{\rm c.p.}	\\
		&=\Int\lan\de K,[\ev_{\bphi(L)},\La]\de M\ran+\text{\rm c.p.}. 
\end{align*} 
\end{theorem}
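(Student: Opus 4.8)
The first equality is immediate substitution: by definition $\{\Int L,\Int M\}=\Int\lan\de L,\La\de M\ran=\Int N_{LM}$, where $N_{LM}=\lan\de L,\phi(M)\ran\in\cF$, and feeding $\Int N_{LM}$ into the outer bracket gives $\{\Int K,\{\Int L,\Int M\}\}=\Int\lan\de K,\La\de N_{LM}\ran=\Int\lan\de K,\La\de\lan\de L,\La\de M\ran\ran$; summing over cyclic permutations yields the first line. The plan for the second equality is to strip $\La$ off the outer slot and reduce everything to the action of the evolutionary differentiation $\ev_{\bphi(K)}$ on the scalar $N_{LM}$. Using the skew-adjointness $\La^*=-\La$ together with the Green formula under $\Int$, I would rewrite $\Int\lan\de K,\La\de N_{LM}\ran=-\Int\lan\de N_{LM},\phi(K)\ran$, and then apply the pairing identity $\Int\lan\de N,\phi\ran=\Int\ev_{\bj\phi}N$ (with $\bj\phi(K)=\bphi(K)$) to obtain $\bfJ\bfI=-\sum_{\text{c.p.}}\Int\ev_{\bphi(K)}N_{LM}$.

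Next, since $\ev_{\bphi(K)}$ is a differentiation commuting with $\Int$ and $N_{LM}=\de_\al L\cdot\phi^\al(M)$, the Leibniz rule splits each summand as $\Int\lan\ev_{\bphi(K)}\de L,\phi(M)\ran+\Int\lan\de L,\ev_{\bphi(K)}\phi(M)\ran$; and because $\phi(M)=\La\de M$, Proposition \ref{P1} (that $\ev_\bphi$ acts coefficient-wise, so $\ev_{\bphi(K)}(\La\de M)=[\ev_{\bphi(K)},\La]\de M+\La\ev_{\bphi(K)}\de M$) splits the second part further. This produces three cyclic sums: the ``Euler'' terms $S_A=\sum_{\text{c.p.}}\Int\lan\ev_{\bphi(K)}\de L,\phi(M)\ran$, the target-type terms $S_B=\sum_{\text{c.p.}}\Int\lan\de L,[\ev_{\bphi(K)},\La]\de M\ran$, and the ``transport'' terms $S_C=\sum_{\text{c.p.}}\Int\lan\de L,\La\ev_{\bphi(K)}\de M\ran$, with $\bfJ\bfI=-(S_A+S_B+S_C)$.

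The crux is to show $S_A+S_C=0$. I would use skew-adjointness of $\La$ to write $S_C=-\sum_{\text{c.p.}}\Int\lan\ev_{\bphi(K)}\de M,\phi(L)\ran$, then apply Lemma \ref{L1} term by term (swapping the two evolutionary labels while fixing the differentiated argument); a careful accounting of the three cyclic terms shows the result is precisely $S_A$, hence $S_C=-S_A$. This leaves $\bfJ\bfI=-S_B$, and a final application of the skew-adjointness $[\ev_{\bphi},\La]^*=-[\ev_{\bphi},\La]$ from Proposition \ref{P2}, again combined with a relabelling of the cyclic sum, converts $-S_B$ into $\sum_{\text{c.p.}}\Int\lan\de K,[\ev_{\bphi(L)},\La]\de M\ran$, the claimed form. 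The main obstacle throughout is the cyclic and index bookkeeping: every cancellation depends on matching the three cyclic summands correctly after each use of Lemma \ref{L1} or Proposition \ref{P2}, and on tracking the signs coming from the two skew-adjointness relations.
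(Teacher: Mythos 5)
Your proposal is correct and follows essentially the same route as the paper's own proof: skew-adjointness of $\La$ to reduce each outer bracket to $-\Int\ev_{\bphi(K)}\lan\de L,\La\de M\ran$, the Leibniz-rule split into the three cyclic sums $S_A,S_B,S_C$, cancellation of $S_A+S_C$ via Lemma \ref{L1}, and the final conversion of $-S_B$ via Proposition \ref{P2} with a cyclic relabelling. No gaps.
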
 
\begin{proof} 
Indeed, 
\begin{align*} 
	&\bfJ\bfI(\Int K,\Int L,\Int M)=\Int\lan\de K,\La\de\lan\de L,\La\de M\ran\ran+\text{\rm c.p.}
		=(\La^*=-\La)= \\
	&=-\Int\lan\La\de K,\de\lan\de L,\La\de M\ran\ran+\text{\rm c.p.}
		=(\La\de K=\phi(K),\de=\bj^*\com\p)= \\
	&=-\Int\lan\phi(K),\bj^*\com\p\lan\de L,\La\de M\ran\ran+\text{\rm c.p.}
		=(\bj\phi(K)\cdot\p=\ev_{\bphi(K)})= \\
	&=-\Int\lan\ev_{\phi(K)}\lan\de L,\La\de M\ran\ran+\text{\rm c.p.}
		=(\La\de M=\phi(M))= \\
	&=-\Int\big(\lan\ev_{\bphi(K)}\de L,\phi(M)\ran+\lan\de L,[\ev_{\bphi(K)},\La]\de M\ran
		+\lan\de L, \La\ev_{\bphi(K)}\de M\ran\big)+\text{\rm c.p.}= \\
	&=-\Int\big(\lan\ev_{\bphi(K)}\de L,\phi(M)\ran+\lan\de L,[\ev_{\bphi(K)},\La]\de M\ran
		-\lan\ev_{\bphi(K)}\de M,\phi(L)\ran+ \\
	&+\lan\ev_{\bphi(L)}\de M,\phi(K)\ran+\lan\de M,[\ev_{\bphi(L)},\La]\de K\ran
		-\lan\ev_{\bphi(L)}\de K,\phi(M)\ran+ \\
	&+\lan\ev_{\bphi(M)}\de K,\phi(L)\ran+\lan\de K,[\ev_{\bphi(M)},\La]\de L\ran
		-\lan\ev_{\bphi(M)}\de L,\phi(K)\ran\big)= \\
	&=(\text{Lemma \ref{L1}})=-\Int\lan\de L,[\ev_{\bphi(K)},\La]\de M\ran+\text{\rm c.p.}
		=(\text{Proposition \ref{P2}})= \\
	&=\Int\lan\de M,[\ev_{\bphi(K)},\La]\de L\ran+\text{\rm c.p.} 
		=\Int\lan\de K,[\ev_{\bphi(L)},\La]\de M\ran+\text{\rm c.p.}.
\end{align*}
\end{proof}

\begin{cor} 
If a differential polynomial $\La : \cE^*\to\cE$ is skew-adjoint, 
and the commutator $[\ev_{\bphi},\La]=0$ for all $\bphi\in\bcE$, 
then the polynomial $\La$ is Hamiltonian, i.e., the Jacobi identity holds. 
\end{cor}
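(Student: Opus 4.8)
The plan is to read the Corollary straight off the representation of the Jacobi expression furnished by Theorem \ref{MT}. That theorem gives, for all $K,L,M\in\cF$,
\begin{equation*}
	\bfJ\bfI(\Int K,\Int L,\Int M)=\Int\lan\de K,[\ev_{\bphi(L)},\La]\de M\ran+\text{c.p.},
\end{equation*}
so the entire obstruction to the Jacobi identity is concentrated in the three commutators $[\ev_{\bphi(K)},\La]$, $[\ev_{\bphi(L)},\La]$, $[\ev_{\bphi(M)},\La]$. Since the hypothesis $\La^*=-\La$ already makes the bracket skew-symmetric by construction, being \emph{Hamiltonian} reduces precisely to the vanishing of this cyclic sum for all $K,L,M$.

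The one step that carries content is to verify that the standing hypothesis $[\ev_{\bphi},\La]=0$, which is posited for \emph{every} $\bphi\in\bcE$, genuinely applies to the specific evolutionary fields $\bphi(R)$ occurring above. Here I would invoke the defining property $\na\com\bj=0$ of the operator $\bj$: since $\phi(R)=\La\de R\in\cE$ and $\bphi(R)=\bj\phi(R)$, we get $\bphi(R)\in\im\bj\subset\ke\na=\bcE$. Thus $\bphi(K),\bphi(L),\bphi(M)\in\bcE$, and the hypothesis forces $[\ev_{\bphi(K)},\La]=[\ev_{\bphi(L)},\La]=[\ev_{\bphi(M)},\La]=0$.

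Substituting these vanishings into the representation turns every summand on the right-hand side into zero, whence $\bfJ\bfI(\Int K,\Int L,\Int M)=0$ for all $K,L,M\in\cF$. This is exactly the Jacobi identity for the Lie-Poisson bracket, so together with the automatic skew-symmetry it shows that $\La$ is Hamiltonian, as claimed.

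The main obstacle is not computational — once Theorem \ref{MT} is available the argument is a single substitution — but consists in correctly matching the hypothesis to the quantities it must annihilate, i.e. in the membership $\bphi(R)\in\bcE$ guaranteed by $\na\com\bj=0$. I would also flag, as a conceptual remark rather than a difficulty, that the Corollary furnishes only a \emph{sufficient} condition: Theorem \ref{MT} demands merely that the cyclic sum of the three paired terms $\Int\lan\de K,[\ev_{\bphi(L)},\La]\de M\ran$ vanish, which is weaker than the term-by-term vanishing $[\ev_{\bphi},\La]=0$ assumed here. A proof circumventing Theorem \ref{MT} would have to replay its computation — skew-adjointness, the Green's formula, the Leibniz expansion of $\ev_{\bphi(K)}$, and the pairwise cancellations of Lemma \ref{L1} — and there the delicate point would be exactly those cancellations, which the stronger hypothesis adopted here renders unnecessary.
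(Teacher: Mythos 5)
Your argument is correct and is exactly the intended one: the paper treats this Corollary as an immediate consequence of Theorem \ref{MT}, and your substitution of $[\ev_{\bphi(R)},\La]=0$ into the cyclic-sum representation, justified by the membership $\bphi(R)=\bj\phi(R)\in\im\bj\subset\ke\na=\bcE$, is precisely that reading. No gaps.
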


\section{The main example -- evolution system without constraints.}
\subsection{The algebra.}
Here (cf., \cite{PJO}, Chapter VII), the algebra $\cF=\cC^\infty_{fin}(\X\bU)$, where 
\begin{itemize} 
	\item 
		$\X=\bbR^\M=\{x=(x^\mu)\mid x^\mu\in\bbR, \ \mu\in\M\}$ 
		is the space of independent variables; 
	\item 
		$\U=\bbR^\A=\{u=(u^\al)\mid u^\al\in\bbR, \ \al\in\A\}$ 
		is the space of dependent variables; 
	\item  
		$\bU=\bbR^\A_\bbI=\{\bu=(u^\al_i)\mid u^\al_i\in\bbR, \ \al\in\A, \ i\in\bbI\}$ 
		is the space of differential variables, $u^\al=u^\al_0$; 
	\item 
		$\cC^\infty_{fin}(\X\bU)$ is the algebra of all smooth functions on 
		the infinite dimensional space $\X\bU=\X\times\bU$, 
		depending on a finite number of the arguments $x^\mu,u^\al_i$. 
\end{itemize} 

\subsection{The differential algebra.}\label{ODA}
The Lie subalgebra $\fD_H$ has the $\cF$-basis $D=\{D_\mu\mid \mu\in\M\}$, 
where the {\it total partial derivatives} 
\begin{equation*} 
	D_\mu=\p_{x^\mu}+u^\al_{i+(\mu)}\p_{u^\al_i}, 
	\quad i+(\mu)=(i^1,\dots,i^\mu+1,\dots,i^m) ,
\end{equation*}
are characterized by the {\it chain rule}  property: 
\begin{itemize}
	\item 
		$\p_{x^\mu}\big(F(x,\bu)\big|_{\bu=\bj\phi(x)}\big)
		=\big(D_\mu F(x,\bu)\big)\big|_{\bu=\bj\phi(x)}, \quad F\in\cF$, 
	\item 
		$\phi=(\phi^\al(x))=\cF^\A$, \quad $\bj\phi=\bphi=(\phi^\al_i(x))\in\cF^\A_\bbI$, 
		\quad $\phi^\al_i(x)=D^i\phi^\al(x)$. 
\end{itemize}

The Lie subalgebra $\fD_V$ has the $\cF$-basis $\{\p_{u^\al_i}\mid \al\in\A, \ i\in\bbI\}$, 
thus, here, $a=\tbinom{\al}{i}$, $\bA=\tbinom{\A}{\bbI}$, 
and the commutator 
\begin{equation*}
	[D_\mu,\p_{u^\al_i}]=-\p_{u^\al_{i-(\mu)}}, 
	\quad\text{i.e.,}\quad \Ga^{i\be}_{\mu\al j}=-\de^\be_\al\de^i_{j+(\mu)}, 
	\quad \al,\be\in\A, \ i,j\in\bbI, \ \mu\in\M.
\end{equation*} 

\subsection{The key ingredients.}
There are defined the mappings: 
\begin{itemize} 
	\item 
		$\na : \cF^\A_\bbI\to\cF^\A_{\M\bbI}, \quad  
		\bphi=(\phi^\al_i)\mapsto\bet=(\eta^\al_{\mu i}), \quad 
		\eta^\al_{\mu i}=D_\mu\phi^\al_i-\phi^\al_{i+(\mu)}$; 
	\item 
		$\bj : \cF^\A\to\cF^\A_\bbI, \quad \phi=(\phi^\al)\mapsto\bphi=(\phi^\al_i), 
		\quad \phi^\al_i=D^i\phi^\al=\de_{ik}\de^\al_\be\cdot D^k\phi^\be$.
\end{itemize}
One can verify by the induction that 
\begin{equation*} 
	\bcE=\ke\na=\big\{\bphi\in\cF^\A_\bbI \ \big| \ \bphi=\bj\phi, \ \phi\in\cF^\A\big\}
	\simeq\cF^A, 
\end{equation*} 
Let us set $\cE=\cF^\A$, then we get the exact sequence 
\begin{equation*}
	0\rightarrow\cE\xrightarrow{\bj}\cF^\A_\bbI\xrightarrow{\na}\cF^\A_{\M\bbI}, 
	\quad\text{i.e.}\quad \ke\bj=0, \quad \im\bj=\ke\na.
\end{equation*} 
Moreover, the commutator $[\ev_{\bphi},\bj]=0$ for all $\bphi\in\bcE$, 
because the differential polynomial 
$\bj=\bj(D)\in\cF^\A_{\bbI\A}[D]$ has constant coefficients 
$j^\al_{ik\be}=\de_{ik}\de^\al_\be$.
 
Further, the $\cF$-module $\Om^1_H$ has the dual $\cF$-basis $\{dx^\mu\mid \mu\in\M\}$,
and for the $\cF$-module $\Om^1_V$ it is convenient to chose the dual $\cF$-basis 
\begin{equation*}
	\big\{\de u^\al_i=du^\al_i-u^\al_{i+(\mu)}dx^\mu \ \big| \ \al\in\A, \ i\in\bbI\big\}. 
\end{equation*}
In particular, $d_V\de u^\al_i=0$, $d_H\de u^\al_i=-\de u^\al_{i+(\mu)}\w dx^\mu$, 
$\al\in\A$ , $i\in\bbI$.

There are defined the Lagrange dual mappings: 
\begin{itemize} 
	\item
		$\na^* : \oF^{\M\bbI}_\A\to\oF^\bbI_\A, \quad 
	\bchi=(\chi^{\mu i}_\al)\mapsto \mbf=(f^i_\al), \quad 
	f^i_\al=D_\mu\chi^{\mu i}_\al+\de^i_{j+(\mu)}\chi^{\mu j}_\al$;
	\item 
		$\bj^* : \oF^\bbI_\A\to\cF_\A, \quad \mbf=(f^i_\al)\mapsto \bj^*\mbf=f=(f_\al), 
		\quad f_\al=(-D)^i\de_{ik}f^k_\al\quad$.
\end{itemize} 
Let us set $\cE^*=\cF_\A$.

\begin{prop} 
Let $\mbf=(f^i_\al),\bg=(g^i_\al)\in\oF^\bbI_\A$, $g^i_\al=\de^i_0f_\al$, 
$f_\al=(\bj^*\mbf)_\al\in\cE$, 
then 
\begin{equation*} 
	\mbf-\bg=\na^*\bchi, \quad\text{where}\quad \bchi=(\chi^{\mu i}_\al)\in\oF^{\M\bbI}_\A, 
	\quad \chi^{\mu i}_\al=-\frac1m\de_{kj}(-D)^k f^{i+j+(\mu)}_\al,  
\end{equation*}
the summation is over all $k,j\in\bbI$,  
remind that only a finite number of components $f^j_\al\ne0$.
\end{prop}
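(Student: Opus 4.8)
The plan is to read the statement as the assertion that the dual complex is exact at $\oF^\bbI_\A$, with $\bchi$ an explicit contracting homotopy, and to prove it by direct substitution. First I would reduce to a consistency check: since $g^i_\al=\de^i_0 f_\al$, the definition of $\bj^*$ gives
\begin{equation*}
(\bj^*\bg)_\al=(-D)^i\de_{ik}g^k_\al=g^0_\al=f_\al=(\bj^*\mbf)_\al,
\end{equation*}
so $\mbf-\bg\in\ke\bj^*$. Because the proposition $\im\na^*\subset\ke\bj^*$ is already available, this at least places $\mbf-\bg$ in the correct kernel; exhibiting a $\bchi$ with $\na^*\bchi=\mbf-\bg$ then does double duty, simultaneously proving membership in $\im\na^*$ and certifying that the inclusion $\im\na^*\subset\ke\bj^*$ is an equality here. (The same membership could be obtained abstractly from $\cS^*=\bcE$ and the du Bois-Reymond property, but since the claim supplies a formula, direct verification is the honest route and needs no such appeal.)

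To motivate and then confirm the explicit $\bchi$, I would produce the integration-by-parts current underlying $\Int f^i_\al\cdot D^i\phi^\al=\Int f_\al\cdot\phi^\al$ (valid mod $\Div$ because $f_\al=(\bj^*\mbf)_\al$). Integrating by parts in a single coordinate direction $\mu$ generates a telescoping current of the one-variable shape $\sum_k(-D)^k(\cdot)$ applied to the coefficient shifted by one unit in the $\mu$-th slot, which is exactly the pattern $f^{i+k+(\mu)}_\al$ appearing in the claim; averaging this current symmetrically over all $m$ directions accounts for the factor $\tfrac1m$. The verification step is then to substitute the candidate into
\begin{equation*}
(\na^*\bchi)^i_\al=D_\mu\chi^{\mu i}_\al+\de^i_{j+(\mu)}\chi^{\mu j}_\al,
\end{equation*}
push each $D_\mu$ through the coefficients with the commutation identity $D_\mu\com(-D)^k=-(-D)^{k+(\mu)}$, and re-index the multi-index sums so that the differentiated summand and the $\de^i_{j+(\mu)}$-contracted summand can be compared term by term. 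Finiteness of $\bchi\in\oF^{\M\bbI}_\A$ is automatic, since only finitely many $f^j_\al$ are nonzero, so each $\chi^{\mu i}_\al$ is a finite sum vanishing for all but finitely many $i$.

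The hard part will be the multi-index bookkeeping in that last verification. The two summands of $\na^*$ couple across \emph{different} shifts of $i$, and the symmetrization over the $m$ directions must redistribute precisely so that, for each fixed output index $i$, all contributions carrying a genuine derivative $(-D)^l f^{i+l}_\al$ with $l\neq0$ cancel, leaving only the undifferentiated term with the correct coefficient and sign. I would organize the computation by grouping contributions according to the shift $l$ with which $f^{i+l}_\al$ occurs, aiming to show that every $l\neq0$ term cancels while the $l=0$ term reproduces $f^i_\al$; the boundary index $i=0$ would be handled separately, since there the $\de^i_{j+(\mu)}$ term drops out and it is precisely this omission that manufactures the correction turning $f^0_\al$ into $f^0_\al-f_\al$. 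Getting the counting in this cancellation to come out exactly right — in particular checking that the $\tfrac1m$ symmetrization neither over- nor under-counts the directions available for integration by parts — is where I expect the real effort to lie.
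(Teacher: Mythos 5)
Your opening reduction is fine as far as it goes: the computation $(\bj^*\bg)_\al=g^0_\al=f_\al=(\bj^*\mbf)_\al$ correctly places $\mbf-\bg$ in $\ke\bj^*$, and you are right that this is only a consistency check, not the statement. The problem is that everything after that is a plan rather than a proof. The entire content of the proposition is the identity $\na^*\bchi=\mbf-\bg$ for the \emph{specific} $\bchi$ given, and you explicitly defer the step where it is established (``is where I expect the real effort to lie''). Since the paper itself offers nothing beyond ``direct check,'' the direct check \emph{is} the proof, and you have not done it.

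Worse, the deferred step is exactly where the argument fails to close as described. Carry out your own plan: write $S^j_\al=\sum_k(-D)^kf^{j+k}_\al$, so $\chi^{\mu i}_\al=-\tfrac1m S^{i+(\mu)}_\al$ and $S^0_\al=f_\al$. Using $D_\mu\com(-D)^k=-(-D)^{k+(\mu)}$ and grouping by the shift $l$ with which $f^{i+l}_\al$ occurs, one finds
\begin{equation*}
(\na^*\bchi)^i_\al=\pm\frac1m\sum_{l}\bigl(n(i)-n(l)\bigr)(-D)^l f^{i+l}_\al,
\end{equation*}
where $n(l)$ is the number of nonzero entries of $l\in\bbI$ (the sign depending on which of the paper's two mutually inconsistent sign conventions for $\na^*$ one adopts). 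For $m=1$ this collapses to $f^i_\al-\de^i_0 f_\al$ and the proposition holds; for $m\ge2$ it does not. Concretely, with $m=2$ and $\mbf$ supported on the single index $(1,0)$, the given $\bchi$ yields $\tfrac12(\mbf-\bg)$, not $\mbf-\bg$: a nonzero shift $l$ is reachable by integration by parts only through the $n(l)$ directions $\mu$ with $l^\mu\ge1$, so the uniform weight $\tfrac1m$ over- or under-counts unless $n(l)=m$. Your worry about whether ``the $\tfrac1m$ symmetrization neither over- nor under-counts'' is therefore not a bookkeeping nuisance but the crux: as written, the cancellation you are counting on does not occur, and a correct homotopy needs either weights $1/n(\cdot)$ or a sequential, direction-by-direction construction. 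A complete proof must either carry this computation through with a corrected $\bchi$ or explain why the stated formula is to be read differently.
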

\begin{proof} 
The proof is done by the direct check.
\end{proof}

\begin{cor} 
The linear space $\oF^\bbI_\A$ splits into the direct sum of the linear spaces, 
$\oF^\bbI_\A=\cE^*\oplus_\bbF\na^*\oF^{\M\bbI}_\A$, 
where the injection mapping $\cE^*\to\oF^\bbI_\A$ acts by the rule: 
$f=(f_\al)\mapsto\mbf=(f^i_\al)$, $f^i_\al=\de^i_0f_\al$.
\end{cor}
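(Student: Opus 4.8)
The plan is to treat $\bj^{*}$ as a projection onto $\cE^{*}$ and to read off both the spanning and the directness of the splitting from it. Write $\iota:\cE^{*}\to\oF^\bbI_\A$ for the injection of the statement, $f=(f_\al)\mapsto\iota f=(f^i_\al)$ with $f^i_\al=\de^i_0 f_\al$, so that the assertion to be proved is $\oF^\bbI_\A=\im\iota\oplus_\bbF\im\na^{*}$, with $\im\iota\simeq\cE^{*}$.

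First I would check that $\iota$ is a section of $\bj^{*}$, i.e.\ $\bj^{*}\com\iota=\id_{\cE^{*}}$. This is a one-line computation from the formula $(\bj^{*}\mbf)_\al=(-D)^i\de_{ik}f^k_\al$: for $f\in\cE^{*}$ one gets $(\bj^{*}\iota f)_\al=(-D)^i\de_{ik}\de^k_0 f_\al=(-D)^0 f_\al=f_\al$, since $(-D)^0=\id$ annihilates every term with $i\ne0$. In particular $\iota$ is injective, so $\im\iota$ is a faithful copy of $\cE^{*}$ inside $\oF^\bbI_\A$.

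Next, the spanning statement $\oF^\bbI_\A=\im\iota+\im\na^{*}$ is exactly the content of the preceding Proposition: given an arbitrary $\mbf\in\oF^\bbI_\A$, set $f=\bj^{*}\mbf\in\cE^{*}$ and $\bg=\iota f$; the Proposition supplies an explicit $\bchi\in\oF^{\M\bbI}_\A$ with $\mbf-\bg=\na^{*}\bchi$, whence $\mbf=\iota f+\na^{*}\bchi$. For directness I would verify $\im\iota\cap\im\na^{*}=0$. If $\mbf=\iota f=\na^{*}\bchi$ lies in the intersection, applying $\bj^{*}$ gives $f=\bj^{*}\iota f$ on one side by the section property and $\bj^{*}\na^{*}\bchi=0$ on the other, the latter because $\bj^{*}\com\na^{*}=0$ was established earlier; hence $f=0$ and $\mbf=\iota f=0$. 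Combining the three steps yields the claimed direct-sum decomposition. I do not expect a genuine obstacle here: the only point needing attention is that all the formally infinite sums involved, namely $\sum_i(-D)^i\de^i_0 f_\al$ and the explicit $\bchi$ of the Proposition, are in fact finite, which is guaranteed by the defining finiteness condition of $\oF^\bbI_\A$ and $\oF^{\M\bbI}_\A$; once this is noted the argument is purely formal.
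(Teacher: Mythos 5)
Your proof is correct and follows essentially the route the paper intends: the preceding Proposition supplies the decomposition $\mbf=\iota(\bj^*\mbf)+\na^*\bchi$, and directness comes from $\bj^*\com\iota=\id_{\cE^*}$ together with $\bj^*\com\na^*=0$. The paper leaves these last two observations implicit (stating the Corollary without proof), so your write-up simply makes the standard retraction argument explicit.
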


\begin{cor}
The following sequence of the linear spaces 
\begin{equation*}
	0\leftarrow\cE^*\xleftarrow{\bj^*}\oF^\bbI_\A=\cE^*\oplus\na^*\oF^{\M\bbI}_\A
		\xleftarrow{\na^*}\oF^{\M\bbI}_\A
\end{equation*} 
is exact. 
\end{cor}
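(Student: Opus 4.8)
The plan is to read both required exactness conditions off the preceding direct-sum decomposition, together with two composition identities that are already at hand, so that essentially no fresh computation is needed.

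First I would pin down what the displayed sequence actually asserts. Since it is capped by $0$ only at its left end, there are exactly two nodes at which to verify exactness: at $\cE^*$ the condition is $\im\bj^*=\cE^*$, i.e. surjectivity of $\bj^*$; and at $\oF^\bbI_\A$ the condition is $\ke\bj^*=\im\na^*$. No injectivity of $\na^*$ is claimed, because the sequence is not terminated by $0$ on the right. Keeping track of the right-to-left arrow direction here is the one place where it is easy to mismatch a node with its condition.

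For surjectivity and the middle exactness together, I would feed the direct-sum decomposition $\oF^\bbI_\A=\cE^*\oplus_\bbF\na^*\oF^{\M\bbI}_\A$ from the preceding Corollary into the two composition identities $\bj^*\com\iota=\id_{\cE^*}$ and $\bj^*\com\na^*=0$. The first identity, $(\bj^*(\iota f))_\al=(-D)^i\de_{ik}\de^k_0 f_\al=f_\al$, where $\iota:f=(f_\al)\mapsto\mbf=(\de^i_0 f_\al)$ is the section supplied by that Corollary, is an immediate one-line check; the second was proved earlier. Writing an arbitrary $\mbf\in\oF^\bbI_\A$ in its unique form $\mbf=\iota(f)+\na^*\bchi$ and applying $\bj^*$ gives $\bj^*\mbf=f$. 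Hence $\bj^*$ is surjective (each $f\in\cE^*$ is hit by $\iota(f)$), and $\bj^*\mbf=0$ holds iff $f=0$ iff $\mbf\in\na^*\oF^{\M\bbI}_\A=\im\na^*$, that is $\ke\bj^*=\im\na^*$. This settles both exactness conditions at once. Alternatively, for the reverse inclusion at the middle node one may invoke the Proposition stated just above the direct-sum Corollary directly: for $\mbf\in\ke\bj^*$ it yields $\mbf-\bg=\na^*\bchi$ with $\bg=(\de^i_0(\bj^*\mbf)_\al)=0$, so $\mbf=\na^*\bchi$.

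The main obstacle, such as it is, has in fact already been cleared upstream: exhibiting $\mbf-\bg$ as $\na^*\bchi$ through the explicit current $\chi^{\mu i}_\al=-\frac1m\de_{kj}(-D)^k f^{i+j+(\mu)}_\al$ is the only genuinely computational step, and it is carried out in the preceding Proposition, with the direct-sum Corollary merely repackaging it. Consequently the remaining work is purely formal, and I expect the whole argument to reduce to assembling these cited facts in the correct order.
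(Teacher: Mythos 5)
Your proposal is correct and follows essentially the route the paper intends: the corollary is stated without proof precisely because it is the immediate consequence of the preceding Proposition and direct-sum Corollary together with $\bj^*\com\na^*=0$ and $\bj^*\com\iota=\id_{\cE^*}$, which is exactly how you assemble it. Your reading of which exactness conditions are actually asserted (surjectivity of $\bj^*$ and $\ke\bj^*=\im\na^*$, with no claim of injectivity for $\na^*$) is also the right one.
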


\begin{cor} 
There is the representation $\cS=\oF^\bbI_\A\big/\na^*\oF^{\bbI\M}_\A=\cE^*$.
\end{cor}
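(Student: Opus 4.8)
The plan is to read the asserted identification directly off the structural results already in hand, since this corollary merely repackages the first corollary above. First I would unwind the notation. By the general definition the module $\cS=H^{1m}_H=\oF_\bA\big/\na^*\oF^\M_\bA$, and in the present example the vertical index set is $\bA=\tbinom{\A}{\bbI}$, so that $\oF_\bA=\oF^\bbI_\A$ and $\oF^\M_\bA=\oF^{\M\bbI}_\A$; the order in which the upper labels $\M$ and $\bbI$ are written is immaterial, so $\oF^{\bbI\M}_\A=\oF^{\M\bbI}_\A$. Hence the claim reduces to the equality $\oF^\bbI_\A\big/\na^*\oF^{\M\bbI}_\A=\cE^*$.

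Next I would invoke the first corollary, which supplies the $\bbF$-linear direct-sum splitting $\oF^\bbI_\A=\cE^*\oplus_\bbF\na^*\oF^{\M\bbI}_\A$, with the copy of $\cE^*$ realized inside $\oF^\bbI_\A$ through the section $f=(f_\al)\mapsto\mbf=(f^i_\al)$, $f^i_\al=\de^i_0f_\al$. The quotient of a direct sum by one of its two summands is canonically isomorphic to the complementary summand, so $\oF^\bbI_\A\big/\na^*\oF^{\M\bbI}_\A\cong\cE^*$, which is exactly the claimed representation of $\cS$.

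Equivalently, and perhaps more transparently, I would argue from the second corollary. Its displayed sequence is exact, which says precisely that $\bj^*$ is surjective onto $\cE^*$ and that $\ke\bj^*=\im\na^*=\na^*\oF^{\M\bbI}_\A$. The first isomorphism theorem then gives $\cE^*=\oF^\bbI_\A\big/\ke\bj^*=\oF^\bbI_\A\big/\na^*\oF^{\M\bbI}_\A=\cS$, the induced isomorphism being $[\mbf]\mapsto\bj^*\mbf$.

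I do not expect any genuinely hard step here. All the analytic content — the explicit integration-by-parts formula of the Proposition exhibiting $\mbf-\bg$ in the form $\na^*\bchi$, and therefore both the splitting and the exactness — has already been established upstream, and the present corollary only records its consequence. The single point that still needs care is the bookkeeping: identifying the abstract index set $\bA$ with $\tbinom{\A}{\bbI}$ and matching $\na^*\oF^\M_\bA$ with $\na^*\oF^{\M\bbI}_\A$, after which the conclusion is immediate from either of the two preceding corollaries.
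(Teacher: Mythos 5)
Your proposal is correct and follows essentially the same route the paper intends: the corollary is an immediate consequence of the preceding direct-sum splitting $\oF^\bbI_\A=\cE^*\oplus_\bbF\na^*\oF^{\M\bbI}_\A$ (equivalently, of the exact sequence via $\bj^*$), both of which rest on the Proposition proved by direct check. The notational bookkeeping with $\bA=\tbinom{\A}{\bbI}$ that you flag is indeed the only point requiring care, and you handle it correctly.
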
 

\subsection{The verification of the assumptions.} 
Let us check that here all five assumptions are fulfilled. 
\begin{itemize} 
	\item 
		The Lie algebra $\fD=\fD(\cF)$ is splitted 
		into the direct sum of vertical and horizontal subalgebras, 
		$\fD=\fD_V\oplus_\cF\fD_H$. 
	\item 
		For every $f\in\cF$ the derivative $\p_{u^\al_i}f\ne0$ only for a finite number 
		of indices  $a=\tbinom{\al}{i}\in\bA=\tbinom{\A}{\bbI}$. 
	\item 
		The differential algebra $(\cF,D)$, $\cF=\cC^\infty_{fin}(\X\bU)$, 
		is of the du Bois-Reymond type (see, e.g., \cite{PJO}). 
	\item 
		The mapping $\bj : \cE\to\bcE$ here enjoys the properties: 
		$\im\bj=\ke\na$ and $[\ev_{\bphi},\bj]=0$ for all $\bphi\in\bcE$, 
		so Assumption 4 is fulfilled. Moreover, here $\im\na^*=\ke\bj^*$, also. 
	\item 
		The Assumption 5, in fact, is a choice of the representation for the mapping $\bLa$, 
		it doesn't lead to any additional restrictions.
\end{itemize}

\section{Conclusion.} 

Let us summarize. 
\begin{itemize} 
	\item 
		It was shown that the Lie-Poisson structures admit a pure algebraic formulation 
		as Hamiltonian mappings over differential algebras. 
	\item 
		The main example, in particular, proves that the five assumptions, listed above, 
		are compatible and up to date. 
	\item 
		The crucial component of the scheme is the symbol 
		$\Ga=(\Ga^b_{\mu a})\in\cF^\bA_{\M\bA}$, 
		determining all further constructions. 
	\item 
		The main (and the most difficult) problem at this stage is to find 
		a mapping $\bj : \cF^\A\to\cF^\bA$ with the necessary properties 
		(i.e., a {\it jet mapping}). 
\end{itemize} 
What are possible applications of our construction? 
Suppose that an evolution system of partial differential equations with constraints 
is given and one wants to try to represent it as a Hamiltonian system.  
To apply the above scheme in this situation one should to specify an appropriate 
differential algebra compatible with constraints and satisfying to the three basic assumptions. 
The natural way to do this provides the algebraic approach to partial differential equations 
(see, e.g., \cite{Z1}). After that one should act as it is described in the introduction. 
Note, that applications of our construction are not restricted to evolution systems 
of partial differential equations. In particular, this method can be applied, 
e.g., in the following researches: \cite{Z3}, \cite{Z4}, \cite{AS1}, \cite{AS2}, 
\cite{MK1}, \cite{MK2}, \cite{MS},  \cite{AKG},\cite{DVB}, \cite{AKP}, \cite{AAS}, etc..  

\newpage


\begin{thebibliography}{99} 

\bibitem{PJO} 
	P.J. Olver, 
	Applications of Lie groups to differential equations, 
	Springer-Verlag, New York, 1986, 1993. 

\bibitem{Z1}
	V.V. Zharinov, Lecture notes on geometrical aspects of partial differential equations, 
	World Scientific Publishing Co. Inc., River Edge, NJ, 1992. 
	
\bibitem{Z11}
	V. V. Zharinov, The formal de Rham complex, 
	Theoret. and Math. Phys., 174:2 (2013), 220--235. 
	
\bibitem{Z12} 
	V. V. Zharinov, Operational calculus in differential algebras, 
	Integral Transform. Spec. Funct., 7:1-2 (1998), 145--158
	
\bibitem{Z2} 
	V.V. Zharinov, Conservation laws of evolution systems, 
	Theoret. and Math. Phys., 68:2 (1986), 745--751.

\bibitem{Z3}
	V. V. Zharinov, Evolution systems with constraints in the form of zero-divergence conditions, 
	Theoret. and Math. Phys., 163:1 (2010), 401--413. 
	
\bibitem{Z4}
 	V. V. Zharinov, B\"aclund transformations, 
 	Theoret. and Math. Phys., 189:3 (2016), 1681--1692.
	
\bibitem{AS1}
	Armen Sergeev, The Sobolev space of half-differentiable functions and quasisymmetric  	
	homeomorphisms, Georgian Math. J., 23:4 (2016), 615--622. 

\bibitem{AS2} 
	A. G. Sergeev, Quantum Calculus and Quasiconformal Mappings, Math. Notes, 100:1 (2016), 
	123--131	

\bibitem{MK1}
	M. O. Katanaev, Rotational elastic waves in a cylindrical waveguide with wedge dislocation,
	 J. Phys. A, 49:8 (2016), 085202. 
	 
\bibitem{MK2}
 	M. O. Katanaev, Killing vector fields and a homogeneous isotropic universe, 
 	Phys. Usp., 59:7 (2016), 689--700. 
	 
\bibitem{MS}
	N. G. Marchuk, D. S. Shirokov, Constant Solutions of Yang-Mills Equations and Generalized 
	Proca Equations, Journal of Geometry and Symmetry in Physics, 42 (2016), 53--72 , arXiv: 
	1611.03070 [math-ph]. 
	
\bibitem{AKG} 
		A K Gushchin, Estimates of the nontangential maximal function for solutions a second-order
		elliptic equation, SB MATH, 2016, 207 (10).  
\bibitem{DVB}
	D. V. Bykov, Cyclic gradings of Lie algebras and Lax pairs for $\sigma$-models, 
	Theoret. and Math. Phys., 189:3 (2016), 1734--1741. 
	
\bibitem{AKP}
	A. K. Pogrebkov, Commutator identities on associative algebras, the non-Abelian Hirota
	difference equation and its reductions, Theoret. and Math. Phys., 187:3 (2016), 823--834. 
	
\bibitem{AAS}
	 A. A. Slavnov, Nonperturbative quantization of models of massive non-Abelian gauge 
	 fields with spontaneously broken symmetry, 
	 Theoret. and Math. Phys., 189:2 (2016), 1645--1650. 

\end{thebibliography}
\end{document}